\documentclass[11pt,a4paper]{article}
\usepackage[a4paper,margin=1in]{geometry}
\usepackage{mathtools}
\usepackage{wrapfig}
\usepackage{amssymb}
\usepackage{enumitem}
\usepackage{graphicx}
\usepackage{xcolor}
\usepackage{soul}
\usepackage{todonotes}
\usepackage{float}
\usepackage{caption}
\usepackage{thmtools}
\usepackage{amsthm}

\usepackage[labelformat=simple]{subcaption}

\usepackage{pgf,tikz}
\usetikzlibrary{decorations.markings, arrows.meta}

\usepackage[linesnumbered,ruled,vlined,algo2e]{algorithm2e}

\usepackage{hyperref}

\DeclarePairedDelimiter\ceil{\lceil}{\rceil}

\newtheorem{theorem}{Theorem}
\newtheorem{corollary}{Corollary}
\newtheorem{lemma}{Lemma}
\newcommand{\sharpS}{%
  \tikz[baseline=0ex, line width=1pt, scale=0.12]{%
    \draw (2,2) -- (1,2) -- (1,0) -- (0,0);
  }%
}

\SetCommentSty{mycommfont}

\newcommand{\Side}{\sqrt{n}}                 
\newcommand{\Lab}{\Lambda}                  
\newcommand{\id}{\mathsf{id}}               
\newcommand{\cert}{\mathsf{c}}              
\newcommand{\sweeplen}{\lambda}             
\newcommand{\sync}{\mathcal{SYNC}}

\begin{document}

\title{\textbf{Beyond the Boundary: Optimal Grid Dispersion by Spread, Reverse, and Collect}}

\author{
Manish Kumar\\
\small Indian Institute of Technology Dhanbad, Dhanbad, India\\
\small\texttt{manish27@iitism.ac.in}
\and
Anisur Rahaman Molla\\
\small Indian Statistical Institute Kolkata, Kolkata, India\\
\small\texttt{molla@isical.ac.in}
\and
Debasish Pattanayak\\
\small Indian Institute of Technology Indore, Indore, India\\
\small\texttt{debasish@iiti.ac.in}
\and
Subhajit Pramanick\\
\small Indian Institute of Technology Delhi, Delhi, India\\
\small\texttt{suvo.iitg17@gmail.com}
}
\date{}
\maketitle

\begin{abstract}
\emph{Dispersion} requires a set of mobile agents to occupy distinct nodes of a graph such that each node contains at most one agent. We study deterministic dispersion of $k\le n$ agents on an $n$-node oriented square grid, where each node has globally consistent North, East, South, and West port labels, from arbitrary initial configurations. This includes both \emph{rooted} configurations, where all agents are co-located initially, and \emph{general} configurations, where agents are initially distributed arbitrarily across the nodes.  On oriented square grids, Kshemkalyani, Molla, and Sharma~\cite{KMS20} gave an $O(\min(k,\Side))$-round algorithm in the local-communication model, where only agents on the same node can exchange information, and an $O(\sqrt{k})$-round algorithm in the global-communication model, where any two agents can communicate regardless of position, both using $O(\log k)$ bits per agent.  The bound in the local communication model is optimal when $k=\Theta(n)$, but it remains open whether local communication alone can match the $\Omega(\sqrt{k})$ time lower bound when $k=o(n)$. 
We solve this by presenting \textsc{Spread--Reverse--Collect} (SRC), a deterministic algorithm that disperses any $k\le n$ agents in $O(\sqrt{k})$ synchronous rounds using $O(\log k)$ bits of persistent memory per agent, matching both time and memory lower bounds.  SRC is based on a geometric primitive, a \sharpS-shaped parallel sweep, together with a simple priority rule.  Unlike earlier grid algorithms that gather agents at the boundary and spread out from there, each SRC sweep starts from the agents' current position, and when it reaches the boundary the group simply reflects its sweep back inside and continues. SRC therefore needs no knowledge of the grid size and runs unchanged on an infinite oriented grid. 


\end{abstract}
\textbf{Keywords:} Dispersion, Mobile agents, Square Grids, Infinite Grids, Distributed algorithms

\maketitle

\section{Introduction}\label{sec:intro}

\subsection{Motivation and Background}
Coordinating autonomous mobile agents is a fundamental problem in distributed computing and robotics.  One of the most basic coordination tasks is \emph{dispersion}, introduced by Augustine and Moses~Jr.~\cite{AM18}, in which $k$ agents initially occupy arbitrary nodes of a graph and  move such that each node contains at most one agent.  
It has applications in various areas, such as robotic coordination in warehouses~\cite{wurman2008coordinating}, sensor deployment, search-and-rescue, and load balancing, while also capturing natural processes, including territorial behaviour~\cite{brawley1977territorial} and cell migration~\cite{le1999neural}.

We study optimal dispersion on an $n$-node square grid from any initial placement: a \emph{rooted} configuration, with all agents co-located, or a \emph{general} configuration, with arbitrary multiplicities.  We use the standard mobile-agent model.  Agents have unique identifiers (IDs) and execute CCM cycles (Section~\ref{sec:model}). We consider the \emph{local communication model}, where only co-located agents can communicate.  Nodes are anonymous and have no computational power.  Agents know neither $n$, the diameter $D$, nor $k$.  We work in the \emph{synchronous} ($\sync$) setting, where all agents are activated together and time is measured in \emph{rounds}.  We evaluate deterministic algorithms by \emph{time} and \emph{memory} (persistent bits per agent).

\subsection{Related Work}
Dispersion is studied across many graph classes and computational models~\cite{flocchini2019}.

\paragraph{General graphs.}\sloppypar
 Augustine and Moses~Jr.~\cite{AM18} introduced the problem and proved an $\Omega(\log n)$ per-robot memory lower bound, later extended to $\Omega(\log(\max(k,\Delta)))$ bits for maximum degree $\Delta$~\cite{KMS19}.  On arbitrary graphs, every deterministic algorithm needs $\Omega(D)$ rounds, where $D$ is the diameter.  Local-communication algorithms were improved from non-optimal bounds~\cite{KA19,KMS19,SSKM20} to near-linear time~\cite{SudoSSNKM25}.  Recent work gives essentially tight general-graph bounds: Kshemkalyani et al. obtain optimal $O(k)$ time under synchrony and $O(k\log k)$ under asynchrony~\cite{KshemkalyaniKMPS25SPAA}, and a subsequent work gives optimal asynchronous dispersion~\cite{PattanayakKMS25DISC}.  These bounds are optimal on graphs such as paths, where an agent may have to cross the network, but structured topologies can admit sublinear time.

\paragraph{Fault tolerance and variants.}
Fault-tolerant dispersion is also studied extensively, including crash faults~\cite{PS021,PSM23,ChandKSM25DAM} and Byzantine variants~\cite{MollaMM21,MollaIPDPS21}.  Dynamic and constrained models include path-connected dynamic graphs~\cite{Saxena025}, time-varying graphs~\cite{SaxenaK025}, and distance-$k$ dispersion on dynamic rings~\cite{MondalGS25DistanceK}.  These works typically use gather-and-scatter strategies or trade time for resilience.

\paragraph{Grids.}
Kshemkalyani, Molla, and Sharma~\cite{KMS20} first studied dispersion on \emph{oriented} grids, where globally consistent port labels let an agent move in a fixed compass direction without any sense of its absolute position.  They give an $O(\min(k,\Side))$-round, $O(\log k)$-memory algorithm in the local model, and reduce the time to $O(\sqrt{k})$ in the global model.  The local bound is optimal only for dense instances $k=\Theta(n)$; for sparse instances $k=o(n)$ it is a $\Theta\!\bigl(\min(k,\Side)/\sqrt{k}\bigr)$ factor above the $\Omega(\sqrt{k})$ lower bound. 
Subsequent work has broadened the grid setting in two directions.
Banerjee, Kumar, and Molla~\cite{BKM24} moved to \emph{unoriented} grids, where an agent no longer has a globally consistent notion of North, East, South, and West, so it cannot simply agree with other agents on a common direction; they study dispersion there both in the fault-free case and in the presence of faulty agents. In a later work~\cite{BKM25}, they return to oriented grids and provide \emph{fault-tolerant} dispersion that remains optimal even when some agents crash, so the surviving agents still reach a dispersed configuration. Himani and Pandit~\cite{HimaniPandit26TCS} consider \emph{triangular} grids, where each node has six neighbours instead of four, and achieve efficient dispersion without agents knowing the grid size in advance. Himani, Gorain, and Pandit study dispersion in torus networks~\mbox{\cite{himani2026}}. 
This is a related setting without a boundary, but a finite torus identifies nodes periodically. Consequently, extending SRC from finite square grids to the infinite grid does not, by itself, extend it to a torus.

Brandt, Uitto, and Wattenhofer~\mbox{\cite{brandt2018tight}} prove that three semi-synchronous finite automata cannot explore the infinite oriented grid. Their result concerns the number of constant-memory agents needed for exploration, rather than the time to disperse a variable population. For a constructive comparison, Brandt, Portmann, and Uitto~\mbox{\cite{BPU20Exploration}} implement a stack using distances between a constant number of agents. Their multiplication, division, and route-following routines encode exploration progress in those distances. SRC instead stores bounded path counters in each agent and assigns different subgroups to different columns of a finite square. Its forward and backward passes combine local settlement, label adoption, and recruitment to redistribute agents and recompute the active population. Thus the relevant distinction is the role of the traversal and its stored state, rather than the mere use of coordinated motion, reversals, or an infinite grid.

We close this gap for non-faulty oriented grids in the local model: our algorithm disperses the agents in $O(\sqrt{k})$ rounds with $O(\log k)$ bits, matching the bound that~\cite{KMS20} achieved only with global communication, in which an agent is allowed to communicate with any other agent in the graph.

\subsection{Our Contribution and Approach}
The $\Omega(\sqrt{k})$ time lower bound for oriented square grids is already known; we restate it as Theorem~\ref{thm:lb} for completeness.  Our contribution is an algorithm matching that bound:

\begin{itemize}
  \item \textsc{Spread--Reverse--Collect} (SRC), built around one geometric primitive --- a \sharpS-shaped parallel sweep --- and one arbitration rule: the \emph{higher-labelled group wins} in settling, mergers, and boundary contacts (Section~\ref{sec:algo}). 
  
  \item A correctness and complexity analysis (Theorems \ref{thm:main} and \ref{thm:correct}) 
  shows that SRC terminates in $O(\sqrt{k})$ rounds with $O(\log k)$ bits per agent.  The time bound uses a \emph{hierarchical merging-and-doubling} argument:  a label's certificate (its recorded group-size value) at least doubles whenever its \emph{level} (its growth counter) increases while the costs of subgroup meetings are counted under the higher label's cycle. Thus level $\ell$ costs $O\!\bigl(2^{\ell/2}\bigr)$, and the geometric sum up to certificate $k$ is $O(\sqrt{k})$.
  \item SRC naturally extends to an infinite oriented grid (Section~\ref{sec:disc}) unlike other related work~\cite{BKM25, KMS20} since boundary handling need not be invoked unless a boundary is encountered.
\end{itemize}

A further point sets our approach apart from earlier grid algorithms. Previous dispersion algorithms on grids rely on the boundary as a reference: agents first gather at a corner or an edge of the grid and then spread out from there, which needs the grid size to be known or discovered and thus the complexity depends on the grid size, failing to extend to an infinite grid. Our sweep never uses the boundary this way. Each group lays out and sweeps its own square from its current position, and when a sweep runs into the boundary the group simply reflects the square back inside (the \emph{bounce} in Section~\ref{subsec:boundary}, also refer to Fig.~\ref{fig:bounce}) and keeps going. The boundary is thus handled as a local event, not as a fixed landmark the algorithm depends on. Because of this, the same algorithm runs unchanged on an infinite grid, where there is no boundary at all. We see this boundary-independent sweep as the main technical novelty of the paper, and it may be of independent interest for other mobile-agent tasks on grids.

\section{Model and Preliminaries}\label{sec:model}

\subsection{The Oriented Square Grid}
The environment is an \emph{oriented square grid} $G=(V,E)$ with $n$ nodes.  We assume, as is standard, that $n$ is a perfect square, so the grid has side length $\Side$ and
$V=\{0,1,\dots,\Side-1\}^2$,
with an edge between every pair of nodes at Manhattan distance $1$.  Each node has four ports labelled $1,2,3,4$, identifying its incident edges as \emph{North} ($+y$), \emph{East} ($+x$), \emph{South} ($-y$), and \emph{West} ($-x$).  The labelling is \emph{consistent}: cardinal directions carry the same port label at every node, so an agent can always move in a cardinal direction by following the port-labels.  Nodes are anonymous and have no memory or computation. Fig.~\ref{fig:sqgrid} shows a grid with $n=25$ nodes.

\begin{figure}[t]
\centering
\begin{tikzpicture}[
    scale=0.6,
    x=2cm, y=2cm,
    node_style/.style={circle, draw=blue!80, fill=blue!20, minimum size=5mm, inner sep=0pt},
    port_style/.style={circle, draw=red!80, fill=red!10, minimum size=2mm, inner sep=0pt},
    every label/.style={font=\tiny, text=red!80}
]
\def\s{5}
\def\portdist{0.33}
\foreach \x in {0,...,\numexpr\s-1} {
    \foreach \y in {0,...,\numexpr\s-1} {
        \ifnum\y<\numexpr\s-1
            \pgfmathtruncatemacro{\ny}{\y+1}
            \draw (\x,\y) -- (\x,\ny);
            \node[port_style, label=center:{\textbf{1}}] at (\x,\y+\portdist) {};
            \node[port_style, label=center:{\textbf{3}}] at (\x,\ny-\portdist) {};
        \fi
        \ifnum\x<\numexpr\s-1
            \pgfmathtruncatemacro{\nx}{\x+1}
            \draw (\x,\y) -- (\nx,\y);
            \node[port_style, label=center:{\textbf{2}}] at (\x+\portdist,\y) {};
            \node[port_style, label=center:{\textbf{4}}] at (\nx-\portdist,\y) {};
        \fi
    }
}
\foreach \x in {0,...,\numexpr\s-1} {
    \foreach \y in {0,...,\numexpr\s-1} {
        \node[node_style] at (\x,\y) {};
        \node at (\x,\y) {\tiny (\x,\y)};
    }
}
\end{tikzpicture}
\caption{An oriented $\Side\times\Side$ grid (here $n=25$) with consistent port labels $1$ (North), $2$ (East), $3$ (South), and $4$ (West). These port labels induce the same orientation at every node.}
\label{fig:sqgrid}
\end{figure}

\subsection{Mobile Agents}
We are given a set $R=\{a_1,\dots,a_k\}$ of $k\le n$ autonomous mobile agents.  Each agent has a unique identifier of $O(\log k)$ bits and $O(\log k)$ bits of persistent memory. The identifier assumption bounds the representation length, rather than giving the number of agents ($k$) knowledge: identifiers can be encoded in $b=O(\log k)$ bits, so their size is at most $2^b=k^{O(1)}$. Identifiers break ties and are stored in labels. The system is synchronous: time proceeds in discrete \emph{rounds}, and in each round every agent executes one \emph{Communicate--Compute--Move} (CCM) cycle:
 \begin{itemize}
  \item \textit{Communicate.}  The agent reads the states of all agents currently at its node.
  \item \textit{Compute.}  From its own state and the states just read, it runs a deterministic local computation that outputs its next state and a move command: an outgoing port label, or the special command \emph{stay}.
  \item \textit{Move.}  The agent traverses the chosen edge, or remains in place.
\end{itemize}
Arbitrarily many agents may share a node, and arbitrarily many may traverse an edge in the same round.  Agents have no \emph{edge-crossing detection}: two agents that swap nodes across an edge in the same round do not notice one another.

\subsection{The Dispersion Problem}
Let $\mathcal{C}_0\colon R\to V$ be the initial configuration, possibly mapping several agents to one node. The \emph{dispersion problem} in this paper asks for a deterministic algorithm that, from any $\mathcal{C}_0$ with $k\le n$, reaches a configuration, where every node has at most one agent, in finite time and every agent stops.

\subsection{The Time Lower Bound}\label{subsec:lb}
The $\Omega(\sqrt{k})$ time lower bound follows from a simple ball-of-reach argument used in earlier grid work~\cite{KMS20}.  We restate it for completeness.

\begin{theorem}\label{thm:lb}
In general, any algorithm that disperses $k$ agents on an oriented square grid requires $\Omega(\sqrt{k})$ rounds in the worst case.
\end{theorem}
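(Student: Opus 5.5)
The plan is a ball-of-reach argument applied to the rooted configuration. Put all $k$ agents on a single node $v$ of the grid. To keep the ball from being cut off by the boundary, take $v$ near the center of the grid, or at least make sure the relevant ball lies inside the grid, which is possible since $k\le n$. Any algorithm runs in synchronous rounds and each agent crosses at most one edge per round. Hence after $t$ rounds every agent lies within Manhattan distance $t$ of $v$. Once the algorithm terminates in a dispersed configuration, the $k$ agents occupy $k$ distinct nodes, all inside the ball $B(v,t)=\{u : d(u,v)\le t\}$. Therefore $|B(v,t)|\ge k$ must hold.

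The key step is to count the nodes of this ball. In the infinite grid $\mathbb{Z}^2$ the ball has exactly $2t^2+2t+1$ nodes. Intersecting with the finite $\Side\times\Side$ grid can only reduce this count, so $|B(v,t)\cap V|\le 2t^2+2t+1$ holds no matter where $v$ is. This means we do not even need to center $v$: the upper bound on the ball's size holds for every starting node, and the boundary only makes things worse for the algorithm.

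Combining the two facts gives $2t^2+2t+1\ge k$, so $(t+1)^2 \ge k/2$ up to constants, which yields $t\ge \sqrt{k/2}-1=\Omega(\sqrt{k})$. This holds for every $k\le n$, since the initial configuration is always realizable.

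There is no real obstacle. The one point needing care is to state that the bound is worst case, witnessed by the rooted instance. In a general configuration, agents that start already dispersed need zero rounds, which is why the theorem reads ``in general'' and ``in the worst case''. I would also note that the argument uses nothing about memory or communication, so the bound holds even with global communication and unbounded memory.
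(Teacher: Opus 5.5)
Your proposal is correct and follows the paper's own proof. Both use the rooted configuration, the bound $|B(v,t)|\le 2t^2+2t+1$ on the Manhattan ball, and the requirement $2t^2+2t+1\ge k$; your bound $t\ge\sqrt{k/2}-1$ is a slightly looser but valid way of solving the same inequality the paper solves exactly as $t\ge\tfrac12(\sqrt{2k-1}-1)$.
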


\begin{proof}
Consider the rooted configuration in which all $k$ agents are at a node $v$.  After $t$ rounds, an agent traverses at most $t$ edges, so it lies within Manhattan distance $t$ of $v$; hence every occupied node lies in the ball $B(v,t)=\{u:\lVert u-v\rVert_1\le t\}$. On a grid, this ball contains at most $2t^2+2t+1$ nodes (Fig.~\ref{fig:diamond}).  A dispersed configuration occupies $k$ distinct nodes, so we need $2t^2+2t+1\ge k$, i.e.,
\[
t \ge \tfrac12\bigl(\sqrt{2k-1}-1\bigr)=\Omega(\sqrt{k}).
\]

\begin{figure}[ht]
\centering
\resizebox{0.3\columnwidth}{!}{
\begin{tikzpicture}[
    x=1.6cm, y=1.6cm,
    node_style/.style={circle, fill=blue!10, minimum size=4mm, inner sep=0pt},
    near_style/.style={circle, line width=0.4mm, draw=black!80!green, fill=green!30, minimum size=4mm, inner sep=0pt}
]
\def\s{7}
\def\cx{3}
\def\cy{3}
\foreach \x in {0,...,\numexpr\s-1} {
  \foreach \y in {0,...,\numexpr\s-1} {
    \ifnum\y<\numexpr\s-1
      \pgfmathtruncatemacro{\ny}{\y+1}
      \draw (\x,\y) -- (\x,\ny);
    \fi
    \ifnum\x<\numexpr\s-1
      \pgfmathtruncatemacro{\nx}{\x+1}
      \draw (\x,\y) -- (\nx,\y);
    \fi
  }
}
\foreach \x in {0,...,\numexpr\s-1} {
  \foreach \y in {0,...,\numexpr\s-1} {
    \pgfmathtruncatemacro{\dist}{abs(\x-\cx) + abs(\y-\cy)}
    \ifnum\dist<3
      \node[near_style] at (\x,\y) {};
    \else
      \node[node_style] at (\x,\y) {};
    \fi
  }
}
\end{tikzpicture}
}
\caption{The ball of radius $t=2$ about the centre is a diamond of $2t^2+2t+1=2\cdot 2^2+2\cdot 2+1=13$ nodes. In general $\lvert B(v,t)\rvert\le 2t^2+2t+1$; requiring this to be at least $k$ gives $t=\Omega(\sqrt{k})$ (discussed in the proof of Theorem~\ref{thm:lb}).}
\label{fig:diamond}
\end{figure}
Therefore, at least one agent must move $\Omega(\sqrt{k})$ steps, and thus any algorithm requires $\Omega(\sqrt{k})$ rounds.
\end{proof}

\section{The Spread--Reverse--Collect (SRC) Algorithm}\label{sec:algo}

We first present the high-level idea of the algorithm (Algorithm \ref{alg:src}) and then move to a more technical description.

\subsection{Overview and Intuition}
\label{sec:overview}

We first describe the idea behind SRC at a high level, starting from the simplest case. Recall that the agents start on the nodes of the grid and must end with at most one agent per node. Agents that are co-located and move together form a \emph{group}. Every group carries a \emph{label} $\Lab=(\ell,\cert,\id)$, where $\ell$ is its \emph{level}, $\cert$ is its \emph{certified size}, and $\id$ is the identifier of its leader. The label determines the group's priority through lexicographic order. The level records how many times the group has successfully grown by a constant factor, while the certificate records the group size established at initialization or at the most recent level-up. An agent is \emph{active} while it participates in the current group movement and \emph{settled} once it has claimed a node as its final position. A settled agent remains on its node as a \emph{settled marker}: it does not move on its own, but it retains information about the sweep that placed it and can later be recruited by a same- or higher-labelled group.

The basic operation of SRC is a \emph{sweep}. Suppose a group has $m$ active agents at an anchor node $A$ (refer to Fig. \ref{fig:sweep}). It chooses a square of side $\gamma=2\ceil{\sqrt{m}}$ and partitions its agents into $\gamma$ subgroups. Each subgroup follows a $\sharpS$-shaped path and is responsible for one column of the square; an agent can settle only on a node covered by its own subgroup.

Since $\gamma^2\ge 4m$, intuitively a single sweep already solves the easiest case. Suppose there is exactly one group on the grid, with no other agents anywhere. Then every column its subgroups walk is empty, so each subgroup settles its agents one after another with room to spare, and the whole group is settled in this one sweep. This first sweep is the \textsc{Spread} phase, and when there is only one group, \textsc{Spread} alone finishes the task of dispersion. The difficulty arises when several groups sweep simultaneously and their square regions overlap. Whenever agents carrying different labels meet, the rule is simple: \emph{the higher label wins}. A lower-labelled moving agent that encounters a higher-labelled group adopts the higher label and follows the winner's current phase. However, the entire lower-labelled group need not be absorbed. Only the part that actually encounters the higher label is transferred; other subgroups of the lower-labelled group may continue independently or settle on free nodes. This \emph{partial absorption} may therefore leave a remaining active part of the lower-labelled group.

When a \textsc{Spread} of a group encounters occupied nodes, some agents may remain unsettled and gather at the far corner $C$ of the sweep; these remaining agents are called the \emph{overflow}. If the overflow is nonempty, the group performs two additional phases. In the \textsc{Reverse} phase, the overflow retraces the sweep from $C$ to $A$, absorbing moving agents of equal or lower label and claiming lower-labelled settled markers, but without settling new agents. Once the surviving moving agents are again co-located at $A$, they are re-ranked and re-partitioned into the same $\gamma$ subgroups. In \textsc{Collect}, they sweep from $A$ to $C$ once more and recruit settled or moving agents of equal or lower label that they encounter. Unlike \textsc{Reverse}, \textsc{Collect} can pick up already settled agents. It settles no agents itself; instead, it gathers the collected active agents at $C$. A \textsc{Spread} followed by \textsc{Reverse} and \textsc{Collect}, when needed, constitutes one \emph{cycle}.

At the end of a clean cycle, the group recomputes its active size at $C$. The certificate $\cert$ is not simply replaced by this new size. Instead, the group increases its level only when the newly collected active group has size at least twice its current certificate and the corresponding subgroup path has not encountered a higher label. In such a case, the new active size becomes the new certificate and the level increases by one. Thus, the certificate records a size that the group has previously established through genuine growth. This distinction is important when higher-labelled groups are encountered: agents may leave a lower-labelled group because the higher label wins, but the resulting decrease in active size does not decrease its certificate and cannot by itself justify a level-up.

The certificate-based growth rule controls the running time. Since the certificate at least doubles whenever the level increases, a group can undergo only $O(\log k)$ level-ups before its certificate reaches the total number $k$ of agents. The cost of a sweep depends on the current group size through $\gamma=2\ceil{\sqrt{m}}$. Since a group at level $\ell$ has certified size at least $2^\ell$, its sweep parameter is $\gamma=O(2^{\ell/2})$, and each sweep takes $O(2^{\ell/2})$ rounds. Summing these costs over all levels gives $O(\sqrt{k})$ rounds. 

Finally, a sweep may reach the boundary of the grid before it completes its intended path. In this case, the group \emph{bounces}: it changes its anchor to the boundary point, reverses its direction toward the interior, keeps the same label and $\gamma$, and starts another \textsc{Spread}. If the sweep is too large to fit within the grid, the group eventually reaches a corner and performs a terminating sweep that settles all remaining active agents. Since the boundary is detected locally through missing ports, the agents do not need to know the size of the grid.

Overall, SRC repeatedly combines three mechanisms: \textsc{Spread} disperses a group over a sufficiently large region, \textsc{Reverse} and \textsc{Collect} recover and enlarge a group when its spread is obstructed, and the higher-label-wins rule resolves interactions between competing groups. The level and certificate mechanism ensures that only genuine growth can increase the scale of future sweeps, while partial absorption prevents a lower-labelled remnant from gaining a higher level merely by changing its active population.

The rest of this section makes these ideas precise. 
Section~\ref{subsec:setup} fixes the labels, the sweep, the per-agent state, and the round-$0$ setup. Sections~\ref{subsec:spread}--\ref{subsec:collect} cover the three phases in turn. Section~\ref{subsec:interactions} gives the rules behind
``the higher label wins'', including how a lower label reacts to a higher one seen only through a settled agent. Section~\ref{subsec:boundary} handles the boundary and Section~\ref{subsec:decision} the end-of-cycle decision and the level-up rule.

\subsection{Labels, the Sweep, and Agent State}\label{subsec:setup}

\paragraph{\textbf{Labels.}}
Every agent carries a \emph{label} 
$\Lab=(\ell,\,\cert,\,\id)$. The first entry $\ell\in\mathbb{N}$ 
is the group's \emph{level}: it counts how many times the group has grown to at least twice its earlier size, each such step being a \emph{level-up} that raises $\ell$ by one (Section~\ref{subsec:decision}). The second entry $\cert\in\mathbb{N}$ is its \emph{certified size}: the number of agents in the group at its most recent level-up, or the number it started with at round~$0$ if it has had none. So the level counts the group's doublings and the certified size is the size it reached at the last of them; in particular, a group at level $\ell$ has certified size at least $2^{\ell}$. The third entry $\id$ is the largest identifier among the agents that formed this certified size, and we call that agent the \emph{leader}. The certified size is not the group's current number of active agents: that number changes during a cycle as agents settle or move to a higher label, whereas $\cert$ changes only at a level-up. In particular, $\cert$ never decreases; if a group loses agents to a higher label it keeps its certified size, and only a level-up can increase it. The label sets a group's priority through the \emph{lexicographic} order: $(\ell_1,\cert_1,\id_1)\prec(\ell_2,\cert_2,\id_2)$ if $\ell_1<\ell_2$, or $\ell_1=\ell_2$ and $\cert_1<\cert_2$, or $\ell_1=\ell_2,\ \cert_1=\cert_2$ and $\id_1<\id_2$. Since identifiers are unique, any two groups have different labels and can always be compared; $\preceq$ is the same order with equality allowed.

At the start of a cycle, the agents sitting together on one node learn their \emph{active size} $m$ (the number of active agents on the node) by local communication. For that cycle they set the \emph{sweep parameter} $\gamma = 2\ceil*{\sqrt{m}}$ and keep it fixed until the cycle ends. Since $\gamma\ge 2\sqrt{m}$, the square has area $\gamma^2\ge 4m$, at least four times the number of active agents, enough room for a group on its own to settle in a single \textsc{Spread} (Lemma~\ref{lem:threecycles}).

\paragraph{\textbf{The sweep.}}\label{subsec:sweep}
A sweep is fixed by three things: an \emph{anchor} $A=(a_x,a_y)$, a \emph{direction} $\delta=(d_x,d_y)\in\{-1,+1\}^2$, and the parameter $\gamma$.  Its \emph{ideal corner} is
\[
  C^\star=\bigl(a_x+(\gamma-1)d_x,\ a_y+(\gamma-1)d_y\bigr).
\]
If $C^\star$ is a real node of the grid ($C^\star\in V$), the sweep is \emph{clean}: its destination is $C=C^\star$ and its length is $\sweeplen=2(\gamma-1)$.  
If $C^\star$ lies outside the grid, the part of the square beyond the boundary is cut off, and we call the sweep \emph{clipped}. In this case the destination is obtained by bringing each coordinate of $C^\star$ to the nearest valid grid coordinate, which we write using the function $\operatorname{clip}$. For a value $z$ and a range with smallest allowed value $a$ and largest allowed value $b$ (so $a \le b$),
\[
  \operatorname{clip}(z,a,b)=\max\{a,\min\{b,z\}\}
  =\begin{cases}
      z, & \text{if } a\le z\le b,\\
      a, & \text{if } z<a,\\
      b, & \text{if } z>b,
   \end{cases}
\]
that is, $\operatorname{clip}(z,a,b)$ returns $z$ itself when $z$ already lies in the range $[a,b]$, and otherwise the endpoint of the range closer to $z$. In our use, $z$ is one coordinate of the ideal corner $C^\star$, and the range is the set of valid coordinates along that axis of the grid, from $a=0$ up to $b=\Side-1$. Applying this to both coordinates,
\[
  C=\bigl(\operatorname{clip}(C^\star_x,0,\Side-1),\ \operatorname{clip}(C^\star_y,0,\Side-1)\bigr),
\]
so a coordinate of $C^\star$ that already lies on the grid is kept unchanged, while one that points past the boundary is replaced by the boundary value on that side, $0$ or $\Side-1$. The \emph{boundary excesses} $t_x=\lvert C^\star_x-C_x\rvert$ and $t_y=\lvert C^\star_y-C_y\rvert$ then say how far $C^\star$ fell outside the grid along each axis, and the sweep length is $\sweeplen=2(\gamma-1)-t_x-t_y$. Agents do not know $\Side$; they only find out about a boundary when a port they need is missing. Moreover, the anchor $A$, the ideal corner $C^\star$, and the destination $C$ are names used only in the description and in the analysis. An agent never learns an absolute grid coordinate. It stores these points only as displacements from the node it currently occupies, and this relative storage is what keeps each agent's memory at $O(\log k)$ bits (Theorem \ref{thm:main}).

\begin{figure}[t]
\centering
\resizebox{0.4\columnwidth}{!}{
\begin{tikzpicture}[
    scale=0.6,
    node/.style={circle, fill=gray!50, inner sep=1.5pt},
    path_node/.style={circle, fill=blue, inner sep=2pt},
    start_node/.style={circle, fill=green!70!black, draw, thick, inner sep=2.5pt},
    end_node/.style={circle, fill=red!80!black, draw, thick, inner sep=2.5pt},
    label_style/.style={font=\small\bfseries}
]
\def\gridsize{6}
\pgfmathtruncatemacro{\N}{\gridsize-1}
\draw[step=1cm, color=gray!30, very thin] (0,0) grid (\N,\N);
\foreach \x in {0,...,\N} {
    \foreach \y in {0,...,\N} {
        \node[node] at (\x,\y) {};
    }
}
\def\startx{0}
\def\starty{0}
\def\sub{3}
\def\side{5}
\coordinate (A) at (\startx,\starty);
\coordinate (B) at (\startx+\sub,\starty);
\coordinate (D) at (\startx+\sub,\starty+\side);
\coordinate (C) at (\startx+\side,\starty+\side);
\draw[blue, ultra thick, line cap=round, line join=round,
      decoration={markings, mark=at position 0.6 with {\arrow{Stealth[length=3mm]}}},
      postaction={decorate}] (A) -- (B);
\draw[blue, ultra thick, line cap=round, line join=round,
      decoration={markings, mark=at position 0.55 with {\arrow{Stealth[length=3mm]}}},
      postaction={decorate}] (B) -- (D);
\draw[blue, ultra thick, line cap=round, line join=round,
      decoration={markings, mark=at position 0.4 with {\arrow{Stealth[length=3mm]}}},
      postaction={decorate}] (D) -- (C);
\foreach \x in {\startx,...,\sub} {\node[path_node] at (\x,\starty) {};}
\foreach \y in {\starty,...,\side} {\node[path_node] at (\sub,\y) {};}
\foreach \x in {\sub,...,\side} {\node[path_node] at (\x,\side) {};}
\node[start_node, label={[label_style, below left] $A$}] at (A) {};
\node[end_node, label={[label_style, above right] $C$}] at (C) {};
\end{tikzpicture}
}
\caption{Clean sweep path of subgroup $s=3$ in a $6\times 6$ square with $\gamma=6$ and direction $(+1,+1)$. The path has $10=2(\gamma-1)$ steps and traverses every node of the covered column $x=3$.}

\label{fig:sweep}
\end{figure}

The group is split into $\gamma$ \emph{subgroups}.  Rank the members by identifier so that the largest identifier gets rank~$1$, the next largest rank~$2$, and so on; a smaller rank thus means a higher priority. An agent of rank $\alpha$ goes into subgroup $s=\alpha\bmod\gamma$.  Subgroup $s\in\{0,\dots,\gamma-1\}$ walks an \sharpS-shaped path with three legs (also see Fig. \ref{fig:sweep-head}):

\begin{enumerate}
  \item \emph{first row leg}: $s$ steps along $d_x$, reaching $(a_x+s d_x,\,a_y)$;
  \item \emph{column leg}: $\gamma-1$ steps along $d_y$, reaching $(a_x+s d_x,\,a_y+(\gamma-1)d_y)$;
  \item \emph{second row leg}: $\gamma-1-s$ steps along $d_x$, reaching $C$.
\end{enumerate}

Subgroup $s$ \emph{covers} the column $x=a_x+s d_x$, that is, the $\gamma$ nodes $(a_x+s d_x,\,a_y+j d_y)$ for $j=0,\dots,\gamma-1$; an agent settles only on a node its own subgroup covers (Section~\ref{subsec:spread}).  If a path hits a boundary, the subgroup keeps going toward $C$ along the boundary; each path loses exactly $t_x+t_y$ steps this way but still ends at $C$ (Lemma~\ref{lem:sync}).  Fig.~\ref{fig:sweep} shows one such path.

\begin{figure}[t]
\centering
\resizebox{0.5\columnwidth}{!}{%
\begin{tikzpicture}[
  scale=0.95,
  gnode/.style={circle, fill=gray!50, inner sep=1.6pt},
  settled/.style={circle, fill=blue!70!black, inner sep=2.9pt},
  headn/.style={circle, draw=orange!90!black, line width=1.2pt, fill=orange!45, inner sep=3.1pt},
  freecov/.style={circle, draw=blue!60!black, line width=1pt, fill=white, inner sep=2.7pt},
  startn/.style={circle, fill=green!70!black, draw, thick, inner sep=2.9pt},
  endn/.style={circle, fill=red!80!black, draw, thick, inner sep=2.9pt},
  walk/.style={blue, ultra thick, line cap=round, line join=round,
      decoration={markings, mark=at position 0.55 with {\arrow{Stealth[length=3mm]}}}, postaction={decorate}},
  togo/.style={blue!45, very thick, line cap=round, dash pattern=on 4pt off 3pt,
      decoration={markings, mark=at position 0.55 with {\arrow{Stealth[length=3mm]}}}, postaction={decorate}},
  dir/.style={black!60, line width=1pt, -{Stealth[length=2.2mm]}},
  lbl/.style={font=\small\bfseries},
  leg/.style={font=\footnotesize, text=black!75}
]
\fill[yellow!30] (2.55,-0.3) rectangle (3.45,5.3);
\draw[step=1cm, color=gray!35, very thin] (0,0) grid (5,5);
\foreach \x in {0,...,5}{\foreach \y in {0,...,5}{\node[gnode] at (\x,\y){};}}
\draw[walk] (0,0) -- (3,0);
\draw[walk] (3,0) -- (3,3);
\draw[togo] (3,3) -- (3,5);
\draw[togo] (3,5) -- (5,5);
\node[settled] at (3,0){};
\node[settled] at (3,1){};
\node[settled] at (3,2){};
\node[headn] at (3,3){};
\node[freecov] at (3,4){};
\node[freecov] at (3,5){};
\node[startn, label={[lbl]below left:$A$}] at (0,0){};
\node[endn, label={[lbl]above right:$C$}] at (5,5){};
\draw[dir] (-0.8,-0.6) -- (-0.3,-0.6) node[leg,below,inner sep=1pt]{$d_x$};
\draw[dir] (-0.8,-0.6) -- (-0.8,-0.1) node[leg,left,inner sep=1pt]{$d_y$};
\node[leg, anchor=north] at (1.5,-0.28){first row leg ($s$ steps)};
\node[leg, rotate=90, anchor=south] at (2.46,2.5){column leg ($\gamma-1$ steps)};
\node[leg, anchor=south] at (4,5.28){second row leg ($\gamma-1-s$ steps)};
\node[lbl, text=orange!55!black, anchor=west] at (3.42,3.0){head};
\node[lbl, text=blue!60!black, anchor=west] at (3.42,1.0){settled markers};
\end{tikzpicture}}
\caption{A mid-sweep snapshot of one subgroup's path ($s=3$, $\gamma=6$, direction $(+1,+1)$) from the anchor $A$ to the far corner $C$, with the three legs labelled in the figure. The \emph{head} (orange) holds the still-moving agents, with settled markers behind it (filled) and unreached covered nodes ahead (hollow).}
\label{fig:sweep-head}
\end{figure}

As a subgroup walks its path, it settles agents on free nodes and leaves them behind, while its still-moving agents keep advancing together (Fig.~\ref{fig:sweep-head}). At any round, these still-moving agents all sit on one node, namely the farthest node of the path the subgroup has reached so far; we call this node the \emph{head} of the subgroup. The settled markers therefore lie on the stretch of path already walked, and the head is the front end of the subgroup, just beyond them. When a settled marker later refers to ``the step at which the head left it,'' it means the step at which the still-moving agents moved off that settled marker's node.

The next lemma shows that the $\gamma$ covered columns together cover the whole square with no overlap.

\begin{lemma}
\label{lem:coverage}
For a clean sweep with $\gamma\ge 2$, the $\gamma$ covered columns partition the square
$[a_x,\,a_x+(\gamma-1)d_x]\times[a_y,\,a_y+(\gamma-1)d_y]$ into $\gamma$ columns of $\gamma$ nodes each. Consequently, each of the $\gamma^2$ nodes is covered by exactly one subgroup, and every node of the square is traversed by at least one subgroup path.
\end{lemma}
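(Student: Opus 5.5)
The plan is a direct coordinate argument. Write the square as $Q=\{(a_x+i d_x,\,a_y+j d_y): 0\le i,j\le \gamma-1\}$. Since $d_x,d_y\in\{-1,+1\}$, the map $(i,j)\mapsto(a_x+i d_x,\,a_y+j d_y)$ is a bijection from $\{0,\dots,\gamma-1\}^2$ onto $Q$. Because the sweep is clean, $C^\star\in V$, and the grid is a product of intervals; both $A$ and $C^\star$ lie in it, so by convexity every node of $Q$ is a genuine grid node. This ensures no clipping occurs and every path leg behaves exactly as specified.

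The first step is to fix the subgroup index set. The ranks $\alpha$ run over $1,\dots,m$, and subgroup $s=\alpha\bmod\gamma$ lies in $\{0,\dots,\gamma-1\}$. We treat all $\gamma$ indices as subgroups, possibly empty, since the covered column is defined per index $s$. The covered column of $s$ is $K_s=\{(a_x+s d_x,\,a_y+j d_y): 0\le j\le\gamma-1\}$, which is exactly the image of $\{s\}\times\{0,\dots,\gamma-1\}$ under the bijection. This gives three facts at once: each $K_s$ has $\gamma$ distinct nodes; $K_s\cap K_{s'}=\emptyset$ for $s\ne s'$, because the $x$-coordinates $a_x+s d_x$ differ when $d_x\neq0$; and $\bigcup_s K_s=Q$. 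Together these show that the columns partition $Q$, and a count gives $\gamma\cdot\gamma=\gamma^2$ nodes, each covered by exactly one subgroup.

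For the traversal claim, it suffices to show that subgroup $s$'s path visits every node of $K_s$. The first row leg ends at $(a_x+s d_x,\,a_y)$, which is the $j=0$ node. The column leg then takes $\gamma-1$ unit steps along $d_y$ and so passes through $j=1,\dots,\gamma-1$. Since the sweep is clean, none of these steps is blocked by a missing port. Hence every node of $Q$ lies on at least one path. The hypothesis $\gamma\ge2$ plays only a cosmetic role: it makes the square nondegenerate, and $\gamma=2\lceil\sqrt m\rceil\ge2$ always holds anyway.

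I expect no real obstacle. The only point needing care is justifying that a clean sweep never touches the boundary. That rests on the grid being the box $\{0,\dots,\Side-1\}^2$ and on both $A$ and $C^\star$ lying inside it, so every coordinate of $Q$ lies between coordinates of $A$ and $C^\star$.
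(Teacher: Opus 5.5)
Your proof is correct and matches the paper's argument for the partition (distinct $x$-coordinates $a_x+sd_x$, $\gamma$ cells per column, union equal to the square), and it also checks explicitly that a clean square lies inside $V$, which the paper leaves implicit. For the traversal claim you are slightly more direct than the paper, which uses the first row leg of subgroup $\gamma-1$ and the second row leg of subgroup $0$ for the anchor and top rows: you note that each subgroup's own path contains its whole covered column, which gives the stronger and operationally relevant fact that every node is traversed by the very subgroup that covers it.
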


\begin{proof}
Subgroup $s$ covers column $x=a_x+s d_x$, whose cells are $(a_x+s d_x,\,a_y+j d_y)$, $j=0,\dots,\gamma-1$.  As $s$ ranges over $\{0,\dots,\gamma-1\}$, these columns are disjoint and cover the square, so each cell is covered exactly once.  The first row leg of subgroup $\gamma-1$ traverses the anchor row, the second row leg of subgroup $0$ traverses the top row, and the column legs traverse every interior column; hence every cell is traversed by some path. 
\end{proof}

\paragraph{\textbf{Agent state.}}\label{subsec:state}
Each agent stores the variables listed in Table~\ref{tab:vars}, and Theorem~\ref{thm:main} shows that they fit in $O(\log k)$ bits. We refer to the part that fixes the geometry of the agent's current sweep as its \emph{sweep variables}: the sweep parameter $\gamma$, the direction $\delta$, and the anchor $A$ and destination $C$, which are kept as displacements from the agent's current node rather than as absolute grid coordinates. Together with its label, subgroup index $s$, step count, epoch, and, for a settled marker, the settling round $r_{\mathrm{set}}$, these let an agent reconstruct its sweep locally.

\begin{table}[t]
\centering
\small
\renewcommand{\arraystretch}{1.1}

\begin{tabular}{|l|p{0.7\columnwidth}|}
\hline
\textbf{Variable} & \textbf{Description}\\
\hline
\texttt{a.id}        & unique integer identifier of the agent.\\\hline
\texttt{a.status}    & one of \texttt{moving}, \texttt{claiming}, \texttt{settled}, \texttt{terminated}.\\\hline
\texttt{a.label}     & current label $\Lab=(\ell,\cert,\id)$ of the agent's group.\\\hline
\texttt{a.mass}      & active size $m$ used to choose the current cycle's $\gamma$.\\\hline
\texttt{a.subgroup}  & index $s\in\{0,\dots,\gamma-1\}$ of the sweep sub-path the agent walks.\\\hline
\texttt{a.step}      & number of steps taken so far on the current sweep path.\\\hline
\texttt{a.settleRound} & for a settled marker, the round in which it was placed or last claimed, together with the step at which the sweep head left it; these values determine marker age locally.\\\hline
\texttt{a.dir}       & current sweep direction $\delta=(d_x,d_y)\in\{-1,+1\}^2$.\\\hline
\texttt{a.anchor}    & anchor $A$ and (clipped) destination $C$ of the current sweep, stored relative to the agent's position.\\\hline
\texttt{a.contact}   & one bit for the current subgroup path: \texttt{clean} until that path actually meets a strictly higher label, and \texttt{contacted} afterwards.\\\hline
\texttt{a.epoch}     & bounded epoch counter used to match a settled marker, late adopters, and the corresponding wait/recovery window.\\\hline
\end{tabular}%

\caption{Persistent state of an agent. A passive-settled agent stores its sweep variables, subgroup, step, and settling round, enabling the marker rule and later recruitment by a same- or higher-labelled \textsc{Collect}.}
\label{tab:vars}
\end{table}

 \paragraph{\textbf{Initialisation (round~0).}}
At round~$0$, look at each set of agents that start on the same node.  They pick the one with the largest identifier as leader and become a single group with $\ell=0$, $\cert=m$ (the number of agents on that node), and $\id$ equal to the leader's identifier.  Every group starts with direction $(+1,+1)$, epoch~$0$, and a clean contact flag.

\paragraph{\textbf{Singleton rule.}}
If an agent is about to start or continue a phase, is the only moving agent on its node, and has $m=1$, it immediately becomes passive-settled, storing its current label, sweep variables, and settling round.  A single agent on its own is already dispersed, unless a later \textsc{Collect} of the same or a higher label picks it up.

\noindent
Finally, we fix the unit we use to measure progress.  A \emph{cycle} of a group is one \textsc{Spread}, plus a \textsc{Reverse} and a \textsc{Collect} if some agents are still unsettled after the \textsc{Spread}; a \textsc{Spread} that settles everyone is a cycle by itself.  The three phases come one after another in Sections~\ref{subsec:spread}--\ref{subsec:collect}.  
An \emph{epoch} of a label is the longer span during which a group keeps one label value $(\ell,\cert,\id)$: it starts when the group first takes that value and ends when the group levels up, merges into a higher label, or finishes. One epoch may therefore contain several sweep attempts, for instance across bounces. The counter \texttt{a.epoch} records the current epoch, so a settled marker and a later visitor can check whether they refer to the same attempt. Both the cycle and the epoch differ from the \emph{CCM cycle} of Section~\ref{sec:model}.
Two facts about labels will be used later.  First, when a group levels up, it drops its old certificate and sets the new one to the active size counted at $C$.  Second, we track per subgroup path whether it has met a higher label: once a path meets one it is marked \texttt{contacted}, and a contacted path follows that higher label and can no longer raise its own old label, while a path that never met a higher label can still use its old label in the end-of-cycle decision.

\subsection{The \textsc{Spread} Phase}\label{subsec:spread}
\textsc{Spread} is the phase in which an agent first becomes settled, so we fix the term here. A \emph{settled marker} is an agent that has taken a node as its final position and stopped; it keeps the label and sweep record it held when it settled, so any agent that later reaches its node can read them. We use the single term \emph{settled marker} throughout for such an agent, and what a lower-labelled agent does when it reads a higher-labelled settled marker (\emph{late adoption}) is covered in Section~\ref{subsec:interactions}.

At the start of \textsc{Spread}, all $m$ moving members of the current active group are co-located at $A$.  The rank rule fixes the $\gamma$ subgroups and the sweep $A\to C$ is computed.  During the next $\sweeplen$ rounds, every nonempty subgroup advances one step per round.  When a member of subgroup $s$ enters a node \emph{covered by $s$}, a one-round local claim handshake decides settlement:
\begin{itemize}
  \item \textbf{Free node.}  The lowest-identifier still-moving member of subgroup $s$ becomes a \texttt{claiming} candidate for one CCM round, carrying the group's label, sweep variables, subgroup index, current step, and settling round.  In the next communication step, all co-located candidates compare labels.  The maximum label wins; if it is in \textsc{Spread}, exactly the lowest-identifier winner-labelled candidate becomes passive-settled and all other candidates keep moving.  If the maximum label is already in \textsc{Reverse} or \textsc{Collect}, no one settles and the candidates follow that higher phase.
  \item \textbf{Settled lower-labelled marker.}  The visiting group's label and sweep variables overwrite those of the settled agent, which stays passive-settled and is now \emph{claimed}.  No visitor settles here; the subgroup walks on.
  \item \textbf{Settled equal-labelled marker.}  The node is blocked; the subgroup walks on without settling.
  \item \textbf{Settled higher-labelled marker.}  After the CCM exchange, the visitors and the settled marker carry the higher label.  If the stored higher-label epoch is still in \textsc{Spread}, the lowest-identifier co-located higher-labelled agent remains the settled representative and the others follow the settled marker rule of Section~\ref{subsec:marker}.  If the stored epoch is already in \textsc{Reverse} or \textsc{Collect}, the visitors do not settle; they follow that higher phase.  Only the visiting subgroup path records \texttt{contacted}; non-co-located subgroups of the old label are unchanged.
\end{itemize}
A member never settles on a row-leg node covered by another subgroup; coverage and the claim handshake make the settling agent unambiguous.

One might ask when a subgroup even meets an \emph{equal}-labelled settled marker during \textsc{Spread}, since two groups that carry equal labels are in fact one and the same group. This happens \emph{within} a single group: on its row legs a subgroup passes over nodes of other columns, where a sibling subgroup of the same label may already have settled an agent; and it also happens across bounces and later cycles, where a marker the same label left on an earlier attempt is met again. In each case the node is simply treated as blocked and the subgroup walks on.

\paragraph{\textbf{Two moving groups meet.}} If two moving groups of distinct labels become co-located during \textsc{Spread}, they merge locally.  Let $\Lab_{\mathrm{hi}}\succ\Lab_{\mathrm{lo}}$ be the two labels.  Every co-located settled loser is claimed, meaning its label is overwritten by $\Lab_{\mathrm{hi}}$; every co-located moving loser adopts $\Lab_{\mathrm{hi}}$.  A moving loser sharing a node with a winner subgroup joins that subgroup at the winner's current step; if it meets only a winner settled marker, it uses Section~\ref{subsec:marker}.  If the winner is in \textsc{Reverse} or \textsc{Collect}, absorbed agents follow that phase immediately.  Non-co-located agents of the losing label are unchanged: they settle, later meet the higher label themselves, or form an overflow remnant.  The winner's sweep object --- its $\gamma$ and subgroup partition --- is \emph{not} rebuilt mid-cycle; all size and certificate updates wait until the winner reaches $C$.

When the $\sweeplen$ rounds end, let $t_x,t_y$ be the boundary excesses.
\begin{itemize}
  \item If $t_x>0$ or $t_y>0$ (a clipped sweep), the group skips \textsc{Reverse} and \textsc{Collect}; see Section~\ref{subsec:boundary}.
  \item If $t_x=t_y=0$ and no moving member is at $C$, every member is passive-settled and the group has no active movers.
  \item Otherwise the moving members at $C$ form the \emph{overflow}, and the group proceeds to \textsc{Reverse}.
\end{itemize}

\subsection{The \textsc{Reverse} Phase}\label{subsec:reverse}
The \emph{number} $\gamma$ of subgroups stays the same; the overflow members are re-assigned to subgroup indices by their new rank (the same $s=\alpha\bmod\gamma$ rule), and each nonempty subgroup retraces its path to $A$ in $\sweeplen$ rounds.
The overflow can be small --- as few as one agent --- so when it has fewer than $\gamma$ members the ranks fall into that many distinct indices and the remaining indices are simply \emph{empty}; only nonempty subgroups move. This is the same empty-subgroup situation as in \textsc{Spread} when $m<\gamma$, so it needs no special handling.
\textsc{Reverse} is a moving-agent rendezvous phase, not a final-settlement phase.  On each covered node, the subgroup absorbs:
\begin{itemize}
  \item any \emph{settled marker} of lower label, whose label and sweep variables are overwritten but which remains settled until a later \textsc{Collect}, and it does not start moving now;
  
  \item any \emph{moving} agent of equal or lower label, which leaves its group, joins and  travels with the subgroup back toward $A$.
\end{itemize}
If the group meets a strictly \emph{higher}-labelled moving agent, it loses the merger and follows the winner as in Section~\ref{subsec:spread}; if it meets only a higher-labelled settled marker, it follows Section~\ref{subsec:marker}.  At the end of \textsc{Reverse}, every surviving moving member is at $A$.  Thus \textsc{Reverse} synchronizes the overflow and any moving agents it meets; settled markers remain final unless the following \textsc{Collect} reaches them.

\subsection{The \textsc{Collect} Phase}\label{subsec:collect}
Again the \emph{number} $\gamma$ of subgroups is unchanged; the members at $A$ are re-ranked and re-assigned to subgroup indices by their new rank, and each subgroup walks $A\to C$ once more in $\sweeplen$ rounds.  On covered nodes, \textsc{Collect} absorbs every settled or moving agent of equal or lower label that it reaches; settled agents un-settle and join the collecting subgroup.  A higher-labelled moving encounter makes the visitor lose a merger, and a higher-labelled settled marker invokes Section~\ref{subsec:marker}.  \textsc{Collect} settles no one; it is, however, the only phase that can recruit final settled markers into the same- or higher-labelled moving group.  When it ends, every surviving collected member is co-located at $C$.

\subsection{Higher-Label Interactions}\label{subsec:interactions}
Every interaction in the three phases (Sections~\ref{subsec:spread}--\ref{subsec:collect}) follows one rule: \emph{the higher label wins}.  Table~\ref{tab:mergers} shows, for each phase, what happens when a higher-labelled agent meets a lower-labelled one.  The rest of this subsection explains two cases in more detail: what happens when a higher-labelled group overlaps only part of a lower group (\emph{partial absorption}), and what a lower-labelled agent does when it first meets a higher label through a settled marker (\emph{late adoption}).

\begin{table}[t]
\centering
\small
\renewcommand{\arraystretch}{1}

\begin{tabular}{|p{0.18\columnwidth}|p{0.24\columnwidth}|p{0.43\columnwidth}|}
\hline
\textbf{Higher-label object encountered} & \textbf{Lower-label \emph{settled} agent at the node} & \textbf{Lower-label \emph{moving} agent at the node}\\
\hline
Moving \textsc{Spread} & Label and sweep variables overwritten in place; the agent stays passive-settled unless a later same- or higher-labelled \textsc{Collect} recruits it. & Co-located loser agents join the winner's subgroup at the winner's step; non-co-located loser agents are unchanged until they meet the winner or finish their own sweep.\\
\hline
Moving \textsc{Reverse} & Lower settled marker is claimed but stays passive-settled. & Co-located lower moving agent adopts the higher label and follows the winner's reverse phase.\\
\hline
Moving \textsc{Collect} & Lower settled agent un-settles and joins the collector. & Co-located lower moving agent adopts the higher label and follows the winner's collect phase.\\
\hline
Settled marker & No interaction between two settled agents. & The lower moving agent adopts the higher label, uses the settled marker's settling round to predict the stored subgroup head, and either joins the still-spreading head, follows the predicted reverse/collect phase, waits at $C$ for an ordinary collect, or starts the unique recovery \textsc{Reverse--Collect} for that label/epoch.\\
\hline
\end{tabular}%

\caption{Resolution of a label clash between two co-located agents. A settled agent stores the sweep variables, subgroup, step, epoch, and settling round of the latest active label, enabling local late adoption from a higher settled marker.}
\label{tab:mergers}
\end{table}

\paragraph{\textbf{Partial absorption and size decrease.}}
A group occupies a $\gamma\times\gamma$ square as $\gamma$ subgroups, so a strictly higher-labelled group $H$ may cover only \emph{part} of it.  The subgroups of $G$ inside the overlapping part meet $H$ directly, step onto nodes $H$ has settled or claimed, or become late adopters (described below); these agents move up to $H$ and their subgroup paths become contacted.  The subgroups outside the overlapping part either settle on free nodes or, if enough of their nodes are blocked to leave agents still moving at $C$, form a smaller overflow of the old label.  So one sweep can split $G$ into three parts: some agents join $H$, some are already settled, and at most the overflow keeps going.  The contacted part keeps $G$'s old certificate and cannot raise the old label.  A part that never met the higher label may still promote at $C$ if it gathers enough agents there.  So any drop in a group's active size is either agents moving up to a higher label or agents settling for good; it is never a reason to level up.

\paragraph{\textbf{Late adoption from a higher settled marker.}}\label{subsec:marker}
A settled marker is not a shared memory cell, but it does store a record of the sweep that placed or last claimed it: the label $\Lab$, the cycle parameter $\gamma$, the anchor $A$, the ideal corner $C^\star$, the direction $\delta$, the subgroup index $s$, the step $q$ at which the subgroup head left it, the epoch, and the settling round $r_{\mathrm{set}}$. It stores the ideal corner $C^\star = (a_x+(\gamma-1)d_x, a_y+ (\gamma-1)d_y)$, not the realised destination $C$: the ideal corner is fixed by $A$, $\gamma$ and $\delta$ at the very start of the sweep, so every member knows it even if the sweep is later clipped. whereas an agent that settled before the boundary was discovered need not know the realised $C$. When a lower-labelled group of moving agents enters the settled marker's node at round $r$, it computes $r-r_{\mathrm{set}}$ and works out which phase the higher-labelled subgroup head is in and where it is on the stored path.

These agents at once adopt the higher label, set their old subgroup-contact flag to \texttt{contacted}, and leave their old group.  What they do next depends on the phase they predicted:
\begin{itemize}
  \item If the \emph{head} is predicted to be in \textsc{Spread}, they walk along the stored subgroup path toward the predicted head.  If they reach the head before $C$, they join that subgroup.  If the subgroup still has enough empty covered nodes left, no overflow is created; if some of the incoming agents are still moving after the last covered node, they find the overflow themselves at $C$.
  \item If the \emph{head} is predicted to be in \textsc{Reverse} or \textsc{Collect}, they join that same higher-labelled pass on the stored path, rather than starting anything for the old label.  Concretely, each such agent begins the predicted \textsc{Reverse} (or \textsc{Collect}) from its current node and moves along it in that phase.  Several late adopters may walk the same labelled pass on the same path at once; this is harmless, because they all carry the same label and meet at the common endpoint of that pass.
  \item This third case is not a separate rule but what happens when the previous cases fail to connect: if the adopters miss the live reverse/collect traffic --- for example, by crossing it on an edge, walking one sweep path while the reverse runs on another, or arriving after the head has already left that stretch --- they keep going to $C$ with the higher label and wait there.
\end{itemize}

At $C$, all agents that adopted the same label and epoch wait up to $2\gamma$ rounds for a \textsc{Collect} caused by overflow of that higher-labelled sweep.  If such a \textsc{Collect} reaches $C$, they join it, and no extra cycle is created.  Otherwise the sweep has no overflow still moving.  The waiting agents pick the largest ID among themselves and start exactly one recovery \textsc{Reverse--Collect} using the stored sweep variables.
By a \emph{recovery \textsc{Reverse--Collect}} we mean an ordinary \textsc{Reverse} followed by an ordinary \textsc{Collect} (as in Sections~\ref{subsec:reverse} and~\ref{subsec:collect}), with two differences: it is started by the waiting late-adopters rather than by a group's own overflow, and it uses the marker's stored sweep variables for its subgroups and path.
This recovery does not promote: it may collect same- or lower-labelled agents, but it can raise the level or certificate only by merging into a live higher-labelled group whose own end-of-cycle decision allows the increase.

If the higher-labelled sweep had no overflow, the same prediction instead leads these agents to the settled head of the subgroup path.  So one local rule covers both cases: either they find the head --- still spreading or already settled --- and join or settle without a new cycle, or they reach $C$, where the $2\gamma$ wait tells an ordinary overflow collection apart from the one recovery cycle.  This is why correctness does not need edge-crossing detection or every subgroup path to be occupied at the same time.

A prediction can also be wrong because the higher sweep was \emph{clipped}: the adopters predict on the ideal path and so may expect a longer sweep than actually happened. This needs no special case. Walking the ideal path, they meet the same boundary the original sweep met, stop at the realized corner, and the $2\gamma$ wait applies there exactly as above --- they join an ordinary \textsc{Collect} if one arrives, and otherwise trigger the single recovery cycle. So every misprediction, clipped or not, falls through to the wait at the realized corner and at most one recovery cycle.

\subsection{Boundary Handling}\label{subsec:boundary}
If \textsc{Spread} is clipped, the intended square reaches the grid boundary, as illustrated in Fig.~\ref{fig:bounce}. Since this does not reflect any change in the group's size, \textsc{Reverse} and \textsc{Collect} are skipped. The agent then distinguishes two cases locally, depending on whether the same $\gamma$-square fits after re-anchoring at the hit corner.

\paragraph{\textbf{Bounce (the sweep fits in the interior, $\gamma\le\Side$).}}
The active subgroup performs a \emph{boundary bounce}, continuing the same \textsc{Spread} attempt:
\begin{itemize}

  \item set $A\leftarrow C$ (the clipped corner);
  \item flip each boundary-hit axis: $d_x\leftarrow-d_x$ if $t_x>0$, $d_y\leftarrow-d_y$ if $t_y>0$, leaving the other axis unchanged;
  \item keep $\ell,\cert,\id,m,\gamma$, the epoch, and the subgroup-contact flags unchanged;
  \item recompute the sweep from $A$ and begin the next \textsc{Spread}.
\end{itemize}
Settled markers placed before boundary discovery remain passive-settled; only the still-moving active subgroup continues from the clipped corner.  Because $\gamma$ is not recomputed, the intended square is simply reflected across the boundary-hit axis, and the certificate is unchanged.  After a bounce on an axis, the next \textsc{Spread} moves \emph{away} from that boundary, so the same boundary cannot be hit again immediately.  Boundary contact is therefore only a \textsc{Spread}-phase event, and \textsc{Reverse}/\textsc{Collect} never clip.

\paragraph{\textbf{Terminating whole-grid sweep ($\gamma>\Side$).}}
If a \textsc{Spread} starting from a grid corner still clips on \emph{both} axes, then $\gamma$ exceeds the grid side length. Agents detect this locally because both required outgoing ports are absent, so flipping the sweep direction cannot make the square fit inside the grid. Since the intended square is at least as large as the grid, the group settles all its $m\le k\le n$ active members over the $n$ nodes according to the coverage rule, and all agents terminate. By Lemma~\ref{lem:threecycles} and Theorem~\ref{thm:main}, this case occurs only at the $O(1)$ largest levels and takes $O(\Side)=O(\sqrt{k})$ rounds.

\begin{figure}[t]
\centering
\resizebox{0.6\columnwidth}{!}{
\begin{tikzpicture}[
    scale=0.7,
    gnode/.style={circle, fill=gray!35, inner sep=1.2pt},
    wall/.style={red!75!black, line width=1.3pt},
    intended/.style={blue!70, thick, dashed},
    reflected/.style={green!50!black, thick},
    ptA/.style={circle, fill=green!70!black, draw, thick, inner sep=2pt},
    ptC/.style={circle, fill=orange!90!black, draw, thick, inner sep=2pt},
    lab/.style={font=\small}
]
\def\W{7}\def\H{6}
\draw[step=1cm, gray!25, very thin] (0,0) grid (\W,\H);
\foreach \x in {0,...,\W}{\foreach \y in {0,...,\H}{\node[gnode] at (\x,\y){};}}

\fill[gray!18] (7,1) rectangle (9,5);
\draw[intended] (5,1) rectangle (9,5);

\fill[green!50!black, fill opacity=0.10] (3,1) rectangle (7,5);
\draw[reflected] (3,1) rectangle (7,5);

\draw[wall] (\W,-0.4) -- (\W,\H+0.4);

\node[ptA, label={[lab,below left]$A$}] at (5,1){};
\node[ptC, label={[lab,above right]$C$}] at (7,5){};

\draw[->, thick, gray!55!black] (8.3,5.7) to[bend right=40] (3.7,5.7);

\node[red!75!black, anchor=south, font=\small\bfseries] at (\W,\H+0.55) {boundary};
\node[gray!55!black, lab, anchor=south] at (3.2,6.15) {reflect back inside};
\node[gray!55!black, lab, align=center] at (8,3) {outside\\the grid};
\node[blue!70, lab, anchor=west] at (7.55,0.6) {intended square};
\node[green!45!black, lab, anchor=west] at (1.05,-0.25) {sweep after the bounce};
\end{tikzpicture}
}
\caption{A \emph{boundary bounce}. The group's intended sweep (blue dashed) starts at $A$ and would run past the right boundary, so part of it (grey) lies outside the grid. On reaching the boundary at $C$, the group reflects the square back inside (green) and sweeps that instead, without ever needing to know the grid size.}
\label{fig:bounce}
\end{figure}

\subsection{Decision at the End of a Clean Cycle}\label{subsec:decision}
After a clean \textsc{Collect}, let $m_{\mathrm{after}}$ be the active size recomputed from the agents at $C$, and set $\gamma_{\mathrm{new}}=2\ceil*{\sqrt{m_{\mathrm{after}}}}$.  The group updates:
\begin{itemize}
  \item $m\leftarrow m_{\mathrm{after}}$,\ \ $\id\leftarrow$ largest identifier among the members,\ \ $\gamma\leftarrow\gamma_{\mathrm{new}}$
  \item $A\leftarrow C$
  \item flip \emph{both} axes: $(d_x,d_y)\leftarrow(-d_x,-d_y)$
  \item if the subgroup path is clean and $m_{\mathrm{after}}\ge 2\cert$, set $\cert\leftarrow m_{\mathrm{after}}$ and increment $\ell\leftarrow\ell+1$ (a \emph{level-up})
  \item if the subgroup path has contacted a strictly higher label, keep $\ell$ and $\cert$ unchanged and treat this as a losing lower-label path; it then run only the bounded recovery steps of the late-adoption rule (Section \ref{subsec:marker}), or a final settling \textsc{Spread}, after which its remaining agents have settled or moved up.
  \item begin the next \textsc{Spread}.
\end{itemize}
Thus a higher-label merger is certified only at $C$: the winning label increases its level only after \textsc{Collect} recomputes at least twice the current certificate there.  Conversely, a lower-label subgroup that lost agents to a higher label cannot use its smaller active population as a new promotion base; it remains a losing label with unchanged certificate.  A clean subgroup that did not meet any higher subgroup may still promote the old label if it reaches $C$ with enough collected agents.  Since promotion is decided only at the end of a cycle, any contact that can block it must have occurred during that cycle's first \textsc{Spread}.

The unconditional double flip is the geometric keystone.  In a clean cycle, the nodes relevant to the continuing active group have either been collected or are already occupied by final same-label settled agents.  Flipping both axes sends the next sweep back through the region whose slack was just exposed, and after a level-up extends beyond it.  In a contacted losing path, the same flip is used only for bounded recovery or final settling; the remnant cannot promote itself after transferring agents to a higher label.

\begin{algorithm2e}
\small
\caption{One round of agent $a$ under SRC.}
\label{alg:src}
\SetKwInOut{Input}{Input}
\Input{state of agent $a$; set $S_a$ of agents co-located with $a$.}
$\Lab^\star\leftarrow$ the maximum label among $S_a\cup\{a\}$\tcp*{higher-label wins}
\If{$a.\textit{label}\prec\Lab^\star$}{
  set this subgroup path's contact flag to \texttt{contacted}\;
  adopt $\Lab^\star$ and the sweep variables of a co-located holder of $\Lab^\star$\;
  \eIf{the holder is moving}{
    join the holder's subgroup at its current step and phase\;
  }{
    follow the settled marker rule of Section~\ref{subsec:marker}\;
  }
}
\If{$a.\textit{status}=\texttt{claiming}$}{
  compare labels and identifiers with all co-located claiming agents\;
  \eIf{$a$ is the lowest-identifier claimant of the maximum \textsc{Spread} label}{
    $a.\textit{status}\leftarrow\texttt{settled}$; store label, subgroup, step, epoch, and settling round\;
  }{
    $a.\textit{status}\leftarrow\texttt{moving}$; follow the maximum label's phase\;
  }
}
\If{$a.\textit{status}=\texttt{settled}$}{
  retain the stored settling round and expose the derived settled marker age to visitors\;
  \If{an equal- or higher-label \textsc{Collect} sweep enters $a$'s node}{
    $a.\textit{status}\leftarrow\texttt{moving}$;\ join the visiting subgroup\;
  }
  \Return\;
}
\Switch{$a$'s current phase}{
  \Case{\textsc{Spread}}{
    advance one step on subgroup $s$'s path\;
    on entering a covered node: apply the claim/settling rule (Section~\ref{subsec:spread})\;
    if $\texttt{step}=\sweeplen$: handle end of \textsc{Spread} --- overflow, clip, or no active movers\;
  }
  \Case{\textsc{Reverse}}{
    advance one step backward on subgroup $s$'s path\;
    on entering a covered node: absorb moving equal- or lower-label agents and claim lower settled markers (Section~\ref{subsec:reverse})\;
    if $\texttt{step}=\sweeplen$: switch to \textsc{Collect}; re-partition the members at $A$\;
  }
  \Case{\textsc{Collect}}{
    advance one step on subgroup $s$'s path\;
    on entering a covered node: absorb equal- or lower-label agents\;
    if $\texttt{step}=\sweeplen$: perform the end-of-cycle decision (Section~\ref{subsec:decision})\;
  }
}
\end{algorithm2e}

\section{Analysis of the Algorithm}\label{sec:analysis}

We analyze SRC by establishing synchronization of its phases, proving monotonicity and the doubling invariant for levels, and showing that each subgroup path participates in only $O(1)$ SRC cycles before it settles, levels up, or transfers to a higher label. This yields the desired time and memory bounds.

\begin{restatable}{lemma}{SyncLemma}\label{lem:sync}
Every \textsc{Spread}, \textsc{Reverse}, or \textsc{Collect} phase lasts exactly $\sweeplen=2(\gamma-1)-t_x-t_y\le 2(\gamma-1)=O(\sqrt{m})$ rounds, and all still-moving members assigned to the phase arrive simultaneously at $C$ (at $A$ for \textsc{Reverse}). The late-adopter wait/recovery window lasts $2\gamma=O(\sqrt{m})$ rounds.
\end{restatable}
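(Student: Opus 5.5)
The argument is a path-length count followed by a synchronization argument. Since every nonempty subgroup advances exactly one step per round and never pauses, a phase lasts exactly as many rounds as each subgroup path has steps. So I will show that all $\gamma$ paths have the same length and the same endpoint.

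\textbf{Clean sweeps.} By the three-leg description, subgroup $s$ walks
\[
s+(\gamma-1)+(\gamma-1-s)=2(\gamma-1)
\]
steps, independent of $s$. The path ends at $(a_x+(\gamma-1)d_x,\,a_y+(\gamma-1)d_y)=C^\star=C$, so all moving members of a \textsc{Spread} or \textsc{Collect} are at $C$ after exactly $\sweeplen$ rounds. For \textsc{Reverse}, each subgroup retraces its own path backward, so it also takes $\sweeplen$ steps and ends at $A$.

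Two mid-phase events need checking.
\begin{itemize}
\item An agent that is absorbed into a subgroup adopts that subgroup's current step counter. Hence its remaining distance equals the subgroup's, and it arrives at the same time.
\item Settling only removes agents from a head; it never delays the head.
\end{itemize}

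\textbf{Clipped sweeps.} This is the step that needs care. The path is monotone in both coordinates, with $d_x$-moves totalling $\gamma-1$ and $d_y$-moves totalling $\gamma-1$. Under the rule ``when a required port is missing, continue toward $C$ along the boundary,'' I would argue that exactly $t_x$ of the $d_x$-steps become impossible, because the $x$-coordinate saturates at $C_x$. Likewise exactly $t_y$ of the $d_y$-steps are dropped.

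I would make this precise by viewing the realized walk as the coordinatewise clip of the ideal monotone walk with repeated positions deleted. Each ideal step either changes the clipped coordinate by one or is absorbed by saturation. The number absorbed on the $x$-axis is exactly $(\gamma-1)-|C_x-a_x|=t_x$, and similarly $t_y$ on the $y$-axis. Every path therefore has length $2(\gamma-1)-t_x-t_y$ and ends at $C=\operatorname{clip}(C^\star)$.

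The main obstacle is making ``keeps going along the boundary'' well defined, so that a subgroup whose column lies entirely outside the grid is handled consistently. My fix is to define the realized step sequence as this deletion, so the count does not depend on $s$. Since $A\in V$ and $C_x$ lies between $a_x$ and $C^\star_x$, the saturation count is exact.

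\textbf{Bound and wait window.} We have $\sweeplen\le 2(\gamma-1)$ because $t_x,t_y\ge0$. With $\gamma=2\lceil\sqrt m\rceil\le 2\sqrt m+2$, this gives $O(\sqrt m)$; for the lemma we take $m\ge1$, so the constant is absorbed. The late-adopter window is $2\gamma$ by the rule of Section~\ref{subsec:marker}, and $2\gamma=O(\sqrt m)$ by the same estimate. Reverse and Collect never clip, as noted in Section~\ref{subsec:boundary}, so they use the same $\sweeplen$ as the preceding \textsc{Spread} and the argument above applies unchanged.
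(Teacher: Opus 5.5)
Your proposal is correct and follows essentially the same route as the paper's proof: the $s$-independent path length $s+(\gamma-1)+(\gamma-1-s)=2(\gamma-1)$, a loss of exactly $t_x$ row-leg steps and $t_y$ column-leg steps under clipping, the estimate $\gamma\le 2\sqrt{m}+2$, and the $2\gamma$ window read off from the late-adoption rule. Your coordinatewise-clip saturation count, in which the $x$-axis absorbs $(\gamma-1)-\lvert C_x-a_x\rvert=t_x$ steps and the $y$-axis symmetrically absorbs $t_y$, is a cleaner, case-free version of the paper's ``checking the three legs''; likewise, your remark that absorbed agents inherit the subgroup's step counter spells out what the paper compresses into ``members still moving after their last settling/absorption arrive together.''
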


\begin{proof}
By Section~\ref{subsec:sweep}, an unclipped subgroup path has length $s+(\gamma-1)+(\gamma-1-s)=2(\gamma-1)$, independent of $s$.  Under clipping, a subgroup that reaches the boundary in $x$ continues toward $C$ along the boundary, replacing the missing $t_x$ row-leg steps; the $y$ case is symmetric.  Checking the three legs shows that each path loses exactly $t_x$ row-leg steps and $t_y$ column-leg steps, so each is shortened by $t_x+t_y$ and still ends at $C$.  Thus, all paths have length $\sweeplen=2(\gamma-1)-t_x-t_y$, and members still moving after their last settling/absorption arrive together.  Since $\gamma=2\ceil*{\sqrt{m}}\le 2(\sqrt{m}+1)$, we have $\sweeplen\le 2(\gamma-1)=O(\sqrt{m})$.  A late-adopting agents use one suffix of such a path plus a $2\gamma$ return window, giving the same asymptotic order. 
\end{proof}

Next, we show that levels are monotone and satisfy the doubling invariant.

\begin{restatable}{lemma}{MonotoneLemma}\label{lem:mono}
The level $\ell$ and certificate $\cert$ carried by an agent never decrease. A label's active size may decrease when agents settle or transfer to a strictly higher label, but that decrease never justifies a level-up. A subgroup path that has contacted a higher label is losing, keeps the old label, and cannot raise $\ell$ or $\cert$.
\end{restatable}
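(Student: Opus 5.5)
We prove the lemma by induction over rounds, taking as invariant that in every round, for every agent $a$, the pair $(\ell,\cert)$ stored in \texttt{a.label} is lexicographically at least what it was in the previous round. Coordinatewise monotonicity of $\ell$ and of $\cert$ separately then follows. We enumerate every rule in Section~\ref{sec:algo} that writes to \texttt{a.label} and sort them into three kinds.

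\textbf{(i) Rules that keep the label unchanged.} These are settling in \textsc{Spread}, passive-settling under the singleton rule, a boundary bounce, the Reverse$\to$Collect transition, and the end-of-cycle decision when the path is contacted or $m_{\mathrm{after}}<2\cert$. By Section~\ref{subsec:boundary} and Section~\ref{subsec:decision}, all of these keep $\ell,\cert$ fixed.

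\textbf{(ii) Adoption rules.} These replace $\Lab$ by a strictly higher $\Lab^\star$. They cover Algorithm~\ref{alg:src}, lines 1--4, merger losers, claimed settled markers, agents recruited by \textsc{Collect} or \textsc{Reverse}, and late adopters.

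\textbf{(iii) The level-up.} This sets $\ell\leftarrow\ell+1$ and $\cert\leftarrow m_{\mathrm{after}}\ge 2\cert$.

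Case (ii) needs care. $\Lab^\star\succ\Lab$ lexicographically, so either $\ell^\star>\ell$, or $\ell^\star=\ell$ and $\cert^\star\ge\cert$. The problematic subcase is $\ell^\star>\ell$ with $\cert^\star<\cert$, which would lower the certificate. To rule it out, we establish a companion invariant: for every live label, $2^{\ell}\le\cert$, and more usefully $\cert$ is an increasing function of the level history. The cleanest route is to show that $\cert\ge 2^\ell c_0$, where $c_0$ is the certificate at the start. That alone does not compare two different labels, so we expect to strengthen the statement.

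The intended reading of the lemma is probably that $\ell$ never decreases and that $\cert$ never decreases \emph{within a label's lineage} (Section~\ref{subsec:setup}: ``$\cert$ never decreases; only a level-up can increase it''). The fallback plan is to prove exactly that. $\ell$ never decreases for an agent, because adoption only moves to a lexicographically larger label, whose first coordinate is at least as large. $\cert$ is never decreased by any rule acting on a fixed label, because only (iii) writes $\cert$ for a fixed label, and it doubles it. We will state this scoping explicitly. This is the main obstacle: making precise which object ``the certificate carried by an agent'' refers to across an adoption.

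The second sentence follows from inspecting where active size changes. It drops only when an agent settles (claim handshake, singleton rule) or is absorbed into $\Lab^\star\succ\Lab$ (case (ii)). Neither event touches $\ell$ or $\cert$. A level-up occurs only through (iii). That step compares the freshly collected $m_{\mathrm{after}}$ against the unchanged $\cert$, never against a smaller size, so a decrease cannot trigger it.

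The third sentence follows from the contact flag. Contact is set exactly when a strictly higher label is seen (Algorithm~\ref{alg:src}, line 3, and the late-adoption rule). It is never reset during the epoch, and Section~\ref{subsec:decision} makes the level-up conditional on the path being clean. Recovery \textsc{Reverse--Collect} explicitly does not promote. Hence the contacted path's old label can never be incremented, and any increase its agents later see is that of the higher label they joined.
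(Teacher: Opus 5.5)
Your split of the label-writing rules into label-preserving steps, adoptions of a $\succeq$ label, and the level-up is the same case analysis the paper's proof uses. Your arguments for $\ell$ and for the second and third sentences agree with it. For the second sentence you argue directly that the threshold $2\cert$ does not depend on active size, whereas the paper routes this through the contact flag; both work.

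You part ways with the paper only on $\cert$ under adoption, and there your caution is justified. The paper lists ``adoption of a higher label'' among the ways $\cert$ changes and concludes ``hence it also never decreases'', but lexicographic order only gives $\ell^\star\ge\ell$. The subcase you isolate, $\ell^\star>\ell$ with $\cert^\star<\cert$, actually occurs. Take a $2$-agent group (label $(0,2,\cdot)$, $\gamma=4$) whose two covered columns are blocked by eight round-$0$ singletons of label $(0,1,\cdot)$. It claims them, overflows, collects them, and levels up to $(1,10,\cdot)$. This outranks $(0,100,\cdot)$, held by a $100$-agent group still in its first \textsc{Spread}. If the small group's next \textsc{Spread} claims one of that group's settled markers, the marker's certificate drops from $100$ to $10$. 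So the lemma's first sentence is false for $\cert$ as literally stated, and no companion invariant can rescue it.

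Commit to your fallback rather than hedging, and cite this example as the reason for the scoping:
\begin{itemize}
\item $\ell$ is non-decreasing per agent;
\item $\cert$ changes only at a level-up, where it at least doubles, or at an adoption;
\item $\cert$ can decrease only at an adoption that strictly raises $\ell$, so along a single label lineage it never decreases.
\end{itemize}

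Also delete the sentence in your first paragraph saying coordinatewise monotonicity ``then follows'' from lexicographic monotonicity of $(\ell,\cert)$. That is exactly the implication that fails. Drop the $\cert\ge 2^\ell c_0$ detour as well.

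The scoped version is all Lemma~\ref{lem:doubling} needs, since that lemma is per label. However, the proof of Theorem~\ref{thm:main} charges an individual agent's work to its current certificate scale. That argument must allow an agent's scale to move downward after such an adoption, and hence to visit the same scale more than once.
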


\begin{proof}
An agent's level changes in two ways: a clean doubling cycle increases $\ell$ by $1$, or a merger gives the agent the winner's label $\Lab_{\mathrm{hi}}\succ\Lab_{\mathrm{lo}}$, whose level is no smaller by lexicographic order.  The certificate $\cert$ changes only at such a clean level-up, after the collected size is recomputed at $C$, or by adoption of a higher label; hence it also never decreases.

For active size, fix one cycle of a label episode $G$.  If a subgroup of $G$ meets no higher label, its end-of-cycle decision uses only the members present at $C$ after \textsc{Collect}.  If that recomputed size is large enough, it certifies a level-up; otherwise the subgroup runs only bounded final/recovery steps.  If a subgroup becomes co-located with a strictly higher-labelled group $H$, enters a higher marker, or is collected by $H$, exactly those encountered agents adopt $H$ and leave $G$.  Agents of $G$ that settle are already dispersed, and only overflow can continue.  Thus every decrease is either final settlement or upward transfer.  In the upward-transfer case, the affected subgroup path is contacted and the decision rule forbids a level-up from the smaller base. 
\end{proof}

\begin{restatable}[doubling invariant]{lemma}{DoublingLemma}\label{lem:doubling}
Whenever a label \emph{attains} level $\ell$ --- at birth for $\ell=0$, or by the level-up that creates it --- its certificate satisfies $\cert\ge 2^\ell$. Hence no agent ever exceeds level $\lceil\log_2 k\rceil$. A contacted losing remnant may later have active size below $2^\ell$, but it keeps the old certificate and cannot use the smaller active size as a promotion base.
\end{restatable}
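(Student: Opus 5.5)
The plan is induction on $\ell$, following a single label value from the moment it is created, and using Lemma~\ref{lem:mono} for the fact that certificates never shrink.

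\emph{Base case $\ell=0$.} A level-$0$ label is created only at round~$0$. The initialisation rule then sets $\cert=m$, the number of agents at that node, and $m\ge 1$. So $\cert\ge 1=2^0$.

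\emph{Inductive step.} Assume every label that has attained level $\ell-1$ did so with certificate at least $2^{\ell-1}$. By the end-of-cycle rule of Section~\ref{subsec:decision}, a level-$\ell$ label can only be created by a level-up of a label $(\ell-1,\cert',\id')$. That level-up fires only when the subgroup path is clean and $m_{\mathrm{after}}\ge 2\cert'$, and the new certificate is then $m_{\mathrm{after}}$. Between attaining level $\ell-1$ and this level-up, the certificate of that label value cannot change. It is altered only at a level-up or by adopting a different (higher) label, and either event ends the epoch of the label value $(\ell-1,\cert',\id')$. So $\cert'$ is still the value the label had when it attained level $\ell-1$, and by induction $\cert'\ge 2^{\ell-1}$. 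Hence the new certificate satisfies $m_{\mathrm{after}}\ge 2\cert'\ge 2^{\ell}$.

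A second route to level $\ell$ is an agent adopting an existing level-$\ell$ label through a merger, a marker claim, late adoption, or recovery. None of these creates a new label value; the agent copies a certificate that already satisfied the bound when that value was born. The remark in Section~\ref{subsec:marker} that a recovery \textsc{Reverse--Collect} never promotes shows it cannot create a label value either.

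\emph{Level cap.} The certificate is always an active size counted at a single node: $m$ at initialisation, or $m_{\mathrm{after}}$ at $C$. These are counts of distinct agents, so $\cert\le k$. Combined with $\cert\ge 2^\ell$ this gives $2^\ell\le k$, hence $\ell\le\log_2 k\le\lceil\log_2 k\rceil$ for every agent.

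\emph{Remnant clause.} This follows from Lemma~\ref{lem:mono}. A contacted path keeps its old $(\ell,\cert)$, and the decision rule forbids a level-up from it. A later promotion of that label would need a clean path with $m_{\mathrm{after}}\ge 2\cert$, where $\cert$ is the unchanged old certificate. So a reduced active size below $2^\ell$ never becomes a promotion base.

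The main obstacle is the claim that the certificate used in the comparison $m_{\mathrm{after}}\ge 2\cert$ is exactly the one fixed when the current level was attained. One must rule out any other route that changes $\cert$ while keeping $\ell$: a bounce (which explicitly keeps $\cert$), a merger (which replaces the whole label rather than editing it), and recovery (which explicitly never promotes). I would check each transition in Sections~\ref{subsec:spread}--\ref{subsec:decision} in turn to confirm that none of them writes to $\cert$ other than the level-up.
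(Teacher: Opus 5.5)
Your proof is correct and follows essentially the same route as the paper's. It uses induction with the base case $\cert=m\ge 1$ at round~0, and handles the level-up case through the clean test $m_{\mathrm{after}}\ge 2\cert$ applied to the unchanged old certificate. As in the paper, mergers and adoptions merely copy a label that already satisfies the bound, the level cap follows from $\cert\le k$, and the remnant clause follows from Lemma~\ref{lem:mono}. Your extra check that nothing writes to $\cert$ between attaining a level and the next level-up (bounces, mergers, recovery) makes explicit what the paper compresses into ``by induction, the old certificate was at least $2^\ell$.''
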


\begin{proof}
Induction over the events that change a label's level.  Initially $\ell=0$ and $\cert=m\ge 1=2^0$.

\emph{Merger (absorption).}  A loser inherits the winner's label $(\ell_w,\cert_w,\id_w)$ with $\cert_w\ge 2^{\ell_w}$ by hypothesis, so the bound holds for the absorbed agents.

\emph{Size decrease.}  By Lemma~\ref{lem:mono}, a higher-label decrease marks only the affected subgroup path as contacted and leaves the old level and certificate unchanged.  It therefore never creates a freshly levelled group, and the departing agents move to a label whose level already satisfies the bound.

\emph{Level-up.}  Only a clean subgroup path with $m_{\mathrm{after}}\ge 2\cert$ raises the level from $\ell$ to $\ell+1$ and sets $\cert\leftarrow m_{\mathrm{after}}$.  The size is recomputed at $C$ after merge/collect completes.  Since the path is clean, it has met no higher label since attaining level $\ell$; agents counted in $m_{\mathrm{after}}$ are present locally at the promotion decision, not a smaller remnant left after upward transfer.  By induction, the old certificate was at least $2^\ell$, and the clean certificate-doubling test gives $$\cert=m_{\mathrm{after}}\ge2\cert_{\mathrm{old}}\ge 2^{\ell+1}$$.

Finally, at any level-up $\cert=m_{\mathrm{after}}\le k$, so $2^{\ell+1}\le k$ and $\ell+1\le\lceil\log_2 k\rceil$. 
\end{proof}

\begin{restatable}[no lower-label overtake]{lemma}{OvertakeLemma}\label{lem:overtake}
A lower-labelled subgroup cannot gain higher priority from the settled marker of a higher-labelled group. If it encounters that settled marker, it adopts the higher label. Otherwise, its end-of-cycle decision depends only on the agents recomputed at its own $C$.
\end{restatable}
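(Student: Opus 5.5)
The plan is to argue directly from the interaction rules of Section~\ref{subsec:interactions} and Algorithm~\ref{alg:src}. The argument splits on whether the lower-labelled subgroup path ever shares a node with a settled marker carrying a strictly higher label.

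\emph{Case 1: the path meets such a marker.} First, only moving agents change label in this situation, and only by comparing labels. In the first line of Algorithm~\ref{alg:src}, a visiting agent computes $\Lab^\star$ as the maximum over co-located labels. If its own label is $\prec\Lab^\star$, it adopts $\Lab^\star$, marks its path \texttt{contacted}, and follows the settled marker rule. Second, the settled marker is never overwritten by a lower visitor. The overwrite rules in \textsc{Spread} (``settled lower-labelled marker''), in \textsc{Reverse}, and in \textsc{Collect} all act only when the visitor's label is higher. Third, the label the visitor ends up with is exactly $\Lab^\star$, an existing label, not a new one. By Lemma~\ref{lem:mono}, $\ell$ and $\cert$ cannot increase through this adoption except to values the higher group already certified. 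So the visitor gains no priority beyond the higher label itself; in particular it cannot exceed the marker's label, and the old lower label is not raised. The contact flag then forbids the old label's level-up by the decision rule of Section~\ref{subsec:decision}.

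\emph{Case 2: the path meets no higher-labelled object.} Its flag stays \texttt{clean}. I then trace what determines $m_{\mathrm{after}}$. Every agent counted at $C$ is either an original member or an agent absorbed in \textsc{Reverse} or \textsc{Collect}. Absorption happens only at equal or lower label, and only on co-location along the path, so by Lemma~\ref{lem:sync} all of these agents arrive at $C$ simultaneously. The decision ($m_{\mathrm{after}}\ge 2\cert$ and a clean flag) is computed by local communication at $C$ in one CCM round. No settled marker contributes state to this test: markers only alter the counts by being recruited, and recruited markers are themselves physically present at $C$. Hence the decision depends only on the agents recomputed at $C$.

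The main obstacle is the late-adoption path and the recovery \textsc{Reverse--Collect}. I must rule out that adopters, while waiting at $C$ or running recovery under the stored sweep variables, promote any label, old or higher. I would handle this by citing the explicit rule that recovery ``does not promote'' except by merging into a live higher group that itself performs the end-of-cycle decision. That makes any promotion again a Case~2 decision of the higher group, based on the agents present at its $C$. Clipped mispredictions fall into the same wait-at-realized-corner case and need no separate treatment.
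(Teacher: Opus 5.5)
Your route matches the paper's. You split on whether the lower subgroup path contacts a strictly higher label. On contact, the agents adopt the existing label $\Lab^\star$, the path becomes \texttt{contacted}, and any later promotion (including after late adoption or a recovery \textsc{Reverse--Collect}) is the higher label's own decision at its $C$. Without contact, the test $m_{\mathrm{after}}\ge 2\cert$ sees only agents physically present at $C$.

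The one real hole is that your two cases are not exhaustive. Case~1 is ``the path meets a higher settled marker'' and Case~2 is ``the path meets no higher-labelled object at all.'' A path that meets a strictly higher \emph{moving} group but never a higher marker falls in neither: a \textsc{Spread} merger, a higher \textsc{Reverse} or \textsc{Collect} pass, or being collected. That situation belongs to the lemma's ``Otherwise'' clause, and your Case~2 argument does not reach it because it relies on the flag staying \texttt{clean}. The paper's proof avoids this by making its first case any first contact with a higher moving subgroup, a higher \textsc{Reverse}/\textsc{Collect}, or a higher marker. The repair is one line. The first branch of Algorithm~\ref{alg:src} sets \texttt{contacted} and adopts $\Lab^\star$ whether the co-located holder is moving or settled, so the head's agents transfer upward. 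For the old label, the decision at $C$ then falls into the contacted branch of Section~\ref{subsec:decision} (no change to $\ell$ or $\cert$), which is again determined by state carried by the agents at $C$.

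Two smaller points.
\begin{itemize}
\item Lemma~\ref{lem:mono} gives only non-decrease. The claim that adoption yields exactly the higher group's already-certified $(\ell,\cert)$ should rest on adoption copying $\Lab^\star$ verbatim, not on that lemma.
\item Your list of agents counted at $C$ (original members plus those absorbed in \textsc{Reverse} or \textsc{Collect}) omits agents that join during \textsc{Spread}. These are lower moving losers merging into the subgroup and late adopters reaching its head. This is harmless, since the conclusion needs only physical presence at $C$, but the enumeration as written is incomplete.
\end{itemize}
You do make explicit two facts the paper leaves implicit: a lower visitor can never overwrite or recruit a higher marker, and markers affect $m_{\mathrm{after}}$ only by being physically collected.
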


\begin{proof}
The first contact with a higher moving subgroup, higher \textsc{Reverse}/\textsc{Collect}, or higher marker is resolved by Sections~\ref{subsec:spread}--\ref{subsec:marker}: the contacted lower agents adopt the higher label and the old subgroup path becomes losing.  If the higher label is in \textsc{Reverse} or \textsc{Collect}, the adopted agents follow that phase; if the higher object is only a marker, the stored round lets them predict the higher subgroup head and either join it, wait for the ordinary collect at $C$, or start the unique recovery cycle.  In all cases, any level increase is the higher label's increase at its own $C$ after recomputing the collected size.

If no contact occurs, the lower subgroup has not seen the higher marker and cannot count those agents in its local decision.  A much smaller lower subgroup therefore finishes its clean settling/final-spread work on its own path before it can acquire the mass of a larger higher-labelled group; acquiring that mass would require contact, which is the previous case.  Thus a smaller lower label cannot overtake a high-size group at the only place priority can change: the recomputed end-of-cycle decision at $C$. 
\end{proof}

\begin{restatable}[bounded bounces]{lemma}{BounceLemma}\label{lem:bounce}
While $\gamma\le\Side$, a group undergoes at most $4$ boundary bounces before its next clean \textsc{Spread}. If $\gamma>\Side$, it performs at most $4$ bounces and then one terminating whole-grid \textsc{Spread}.
\end{restatable}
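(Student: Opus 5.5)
We track, across consecutive bounces, the direction vector $\delta=(d_x,d_y)$ and which grid sides the anchor touches. By Section~\ref{subsec:boundary}, a bounce sets $A\leftarrow C$ (the clipped corner) and flips exactly the axes with positive excess, keeping $\gamma$ fixed. The key observation is that after a bounce on axis $x$, the new anchor has $x$-coordinate equal to $0$ or $\Side-1$, and the new $d_x$ points into the interior. The $x$-extent of the next sweep is $[a_x,a_x+(\gamma-1)d_x]$, which fits inside $[0,\Side-1]$ precisely when $\gamma\le\Side$. So, when $\gamma\le\Side$, an axis that has been flipped once can never clip again in the current attempt: the anchor keeps that boundary coordinate and the square now extends inward by $\gamma-1\le\Side-1$. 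The same holds for $y$. Each bounce flips at least one previously unflipped axis, since a clipped sweep has $t_x>0$ or $t_y>0$ and an already-flipped axis cannot clip. Hence after at most two bounces both axes fit, and the next \textsc{Spread} is clean.

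This gives a bound of $2$, which is stronger than the stated $4$. I would still present the argument with slack, allowing each of the four sides to be hit at most once: each bounce moves the anchor onto a side and points the sweep away from it. This also absorbs the interaction with end-of-cycle double flips and any re-anchoring at a fresh corner, which keeps the bound robust.

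For $\gamma>\Side$, I would argue as follows. Any axis that clips gets flipped and the anchor lands on that boundary. The next sweep on that axis goes from one boundary toward the opposite one and clips again, since its extent $\gamma-1$ exceeds $\Side-1$. After at most two bounces, then, the anchor sits at a grid corner with both directions pointing inward, and both axes clip. That is exactly the detection condition for the terminating whole-grid \textsc{Spread}, where both required outgoing ports are absent after flipping. Accounting for up to one clip per side before reaching that corner configuration gives at most $4$ bounces, followed by the terminating sweep.

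The main obstacle is making precise the claim that a flipped axis stays fitting. I need to check that clipping on the \emph{other} axis in a later bounce does not move the anchor's coordinate on the already-fixed axis. This holds because clip only alters coordinates that overshoot, and $C_x$ equals $a_x+(\gamma-1)d_x$ or a boundary value. I also need to check that late-adopter or merger events during a bounced \textsc{Spread} do not reset $\delta$. Merged agents adopt the winner's sweep variables, so I would count bounces per label attempt and rely on Section~\ref{subsec:boundary} keeping $\gamma$ and the epoch fixed.
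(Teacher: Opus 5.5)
There is a genuine gap, and it is exactly at the step you flagged as the main obstacle. The invariant that an already-flipped axis keeps its boundary anchor coordinate is false under the rule you quote. A bounce replaces the \emph{whole} anchor by $C$. So on a later bounce that clips only in $y$, we have $t_x=0$ and $C_x=C^\star_x=a_x+(\gamma-1)d_x$. The new anchor's $x$-coordinate is then the far edge of the previous square, not the boundary value $a_x$. Since $d_x$ is not flipped, the following sweep extends to $a_x+2(\gamma-1)d_x$, which is past the opposite boundary as soon as $2(\gamma-1)>\Side-1$. Your remark that clip alters only overshooting coordinates is true, but it does not help: the non-overshooting coordinate of $C$ is not the old anchor coordinate.

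Here is a concrete counterexample with $\Side=10$, $\gamma=6\le\Side$, starting at $A=(5,0)$, $\delta=(+1,+1)$. The ideal corner $(10,5)$ clips in $x$, giving $A=(9,5)$, $\delta=(-1,+1)$. Then $(4,10)$ clips in $y$, giving $A=(4,9)$, $\delta=(-1,-1)$. Then $(-1,4)$ clips in $x$, giving $A=(0,4)$, $\delta=(+1,-1)$. Then $(5,-1)$ clips in $y$, returning to $A=(5,0)$, $\delta=(+1,+1)$. Every sweep of this period-$4$ loop is clipped, and every side is hit repeatedly. So while the group keeps movers (for instance because its covered cells are blocked by lower- or equal-labelled markers), neither your bound of $2$ nor the padded ``each side at most once'' bound of $4$ holds. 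Your conclusion survives only when $2(\gamma-1)\le\Side-1$, where the shifted anchor still leaves room.

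The paper's own proof rests on the same unjustified step: it asserts that after one bounce per axis the square is ``anchored at the resulting corner.'' So this is a flaw in the lemma together with the bounce rule as written, not an idea you missed relative to the paper. Closing it requires changing the rule in one of two ways.
\begin{itemize}
\item If a bounce moves only the clipped coordinates of $A$ to the boundary and leaves the unclipped anchor coordinates and directions alone, your argument goes through as written.
\item If a bounce flips both axes at $C$, the next sweep is always clean, since the unclipped axis simply sweeps back to the old anchor coordinate.
\end{itemize}
Either variant reproduces the reflected square drawn in Fig.~\ref{fig:bounce}, whose $y$-range is unchanged; the literal rule does not.

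Your treatment of $\gamma>\Side$ is fine. In fact every sweep then clips on both axes, so a single bounce already puts the anchor at a corner and triggers the terminating sweep.
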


\begin{proof}
A bounce on the $x$-axis flips $d_x$ and re-anchors at the hit boundary, so the next \textsc{Spread} moves in direction $-d_x$, away from that boundary; the same $x$-boundary cannot be hit again immediately, and the $y$ case is symmetric.  After at most one bounce per axis, the next \textsc{Spread} heads into the interior on both axes.  If $\gamma\le\Side$, a $\gamma\times\gamma$ square anchored at the resulting corner fits, so the \textsc{Spread} is clean; before this alignment, the worst case is one bounce per axis-direction, at most $4$.  If $\gamma>\Side$, no square fits, the both-axes clip at a corner is detected, and the terminating whole-grid \textsc{Spread} of Section~\ref{subsec:boundary} runs. 
\end{proof}

\begin{restatable}[three cycles per label]{lemma}{ThreeCyclesLemma}\label{lem:threecycles}
Fix a label value $\Lab=(\ell,\cert,\id)$, one epoch of that label, and one subgroup path of that epoch with $\gamma\le\Side$. The path is charged for at most three SRC cycles of label $\Lab$ before it levels up, passive-settles all remaining agents, or transfers all remaining active agents to a strictly higher label. Sweep-path crossings create no extra cycles for the lower label: crossing agents transfer to the higher path and are charged there.
\end{restatable}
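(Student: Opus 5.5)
I would prove the lemma by following one fixed subgroup path of one epoch of $\Lab$ and charging each of its SRC cycles to one of three possible outcomes. Throughout, I would use Lemma~\ref{lem:sync} for phase synchronisation and Lemma~\ref{lem:bounce} to absorb bounces into the same \textsc{Spread} attempt. By Lemma~\ref{lem:bounce}, with $\gamma\le\Side$ at most four bounces precede a clean \textsc{Spread}, and a bounce keeps $\gamma$, the label and the epoch. So bounces lengthen an attempt by $O(\gamma)$ rounds but do not count as separate cycles. This reduces the claim to counting clean cycles of the path.

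\textbf{Cycle 1.} Consider the first clean \textsc{Spread}. If it meets no free-node shortage, then $\gamma^2\ge 4m$ and Lemma~\ref{lem:coverage} imply that every active agent finds a free covered node, so the path passive-settles everything and we are done. If it contacts a strictly higher label, then by the late-adoption rules and Lemma~\ref{lem:overtake} the encountered agents transfer. The charge for any further work, including the waiting window and the unique recovery \textsc{Reverse--Collect}, goes to the higher label. The remnant is contacted, and by Lemma~\ref{lem:mono} it may only run bounded recovery or final settling steps. This is where the ``crossings create no extra cycles'' clause gets proved: I would argue that each crossing agent adopts the higher label at its first contact, so its subsequent cycles are charged to the higher path and not to $\Lab$.

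\textbf{Cycles 2 and 3.} The remaining case is that the path is clean and overflow reaches $C$. Then \textsc{Reverse} and \textsc{Collect} run, and the end-of-cycle decision at $C$ either levels up, if $m_{\mathrm{after}}\ge 2\cert$, or does not. In the non-promoting case, I would argue that the collected mass is less than $2\cert$ and hence less than twice the mass the square was sized for. The next \textsc{Spread} uses the doubly flipped square, which covers the region whose slack was just exposed. Its occupants are either same-label markers already collected or free nodes, so the area argument ($\gamma_{\mathrm{new}}^2\ge 4m_{\mathrm{after}}$, with the collected nodes now vacated) should give enough free covered nodes to settle everyone. At worst, a third cycle follows one more contact or clipping event. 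I would bound this as follows. After cycle two, every lower-labelled obstruction inside the square has been claimed or collected. The only obstructions that remain are higher-labelled ones, and these force a transfer, which is outcome (iii).

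The main obstacle is the non-promoting clean case: showing rigorously that the double flip together with the area slack forces settlement within the third cycle. The risk is that a path might repeatedly find its square partially blocked by equal-labelled markers from sibling subgroups. My first approach would be a counting argument. In the flipped square of area at least $4m$, equal-label markers number at most the agents the label ever settled, which is at most $2\cert$ when there is no level-up. Free covered nodes then number at least $\gamma^2-2\cert-(\text{higher-label nodes})$. Higher-label nodes either trigger a transfer or are few enough that the per-column slack covers the subgroup's share, using the rank-mod-$\gamma$ partition to bound the share of each column.
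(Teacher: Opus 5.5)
Your decomposition matches the paper's. A clean path either levels up at $C$ or is re-spread by the double flip, and agents that meet a higher label are charged to that label.

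The genuine gap is the step you flag yourself: that a clean, non-promoting path settles in the next \textsc{Spread}. The paper handles this case with a one-line claim: the flipped sweep goes back ``through the same geometric square'', whose slack absorbs the remnant. Your counting sketch tries to justify that claim, and as written it fails for four reasons.

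(i) The flipped square is in general not the old one. \textsc{Collect} can absorb lower-labelled agents, so $m_{\mathrm{after}}$ can be anything below $2\cert\le 2m_0$, and $\gamma_{\mathrm{new}}=2\lceil\sqrt{m_{\mathrm{after}}}\rceil$ can approach $\sqrt2\,\gamma_0$. The new square, anchored at the old $C$, then extends $\gamma_{\mathrm{new}}-\gamma_0$ rows and columns past the old anchor, into cells the preceding \textsc{Collect} never visited. Lower-labelled markers there are merely claimed by \textsc{Spread}, so they stay as blockers carrying $\Lab$.

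(ii) Even inside the old square, nothing guarantees the cells were vacated. \textsc{Reverse} and \textsc{Collect} are walked only by the nonempty re-ranked subgroups. A population of fewer than $\gamma$ agents at $A$ therefore revisits at most that many columns.

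(iii) The bound ``$\Lab$-markers $\le 2\cert$'' has no basis. The test $m_{\mathrm{after}}<2\cert$ bounds only the collected active population. Claiming, by contrast, relabels arbitrarily many lower-labelled markers with $\Lab$.

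(iv) Even a correct global count would not suffice, because settlement is per column. Subgroup $s$ has up to $\lceil m/\gamma\rceil\le\gamma/4+1$ members and may settle only in its own column. A single column with fewer free cells than that forces overflow and another \textsc{Reverse}/\textsc{Collect}, however large the global slack $\gamma^2-2\cert$ is. Your phrase ``the per-column slack covers the subgroup's share'' is exactly what needs an invariant. The rank-mod-$\gamma$ partition bounds the share, but not the number of blocked cells in a column.

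Your count of three is also unsupported. You attribute the third cycle to ``one more contact or clipping event''. By your own reduction, though, a clipping is a bounce within the same attempt. A contact yields a transfer to the higher label, which is a terminal outcome of the lemma, not a further cycle of $\Lab$. Your claim that after cycle two only higher-labelled obstructions remain fails for reasons (i)--(ii). It also fails because claimed markers still block, and because concurrent groups can deposit new lower-labelled markers behind the \textsc{Collect} head.

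In the paper, the three arises differently. A clean path is asserted to use at most two cycles. A contacted path uses the contact cycle, one remnant or final-settling cycle, and one recovery \textsc{Reverse--Collect} charged to the higher label. The boundedness of the contacted remnant comes from the decision rule of Section~\ref{subsec:decision}, which restricts a contacted path to late-adoption recovery or one final settling \textsc{Spread}. It does not come from Lemma~\ref{lem:mono}, which only says that $\ell$ and $\cert$ do not grow.

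To close the argument, you would need an explicit column-by-column invariant on the blocked cells of the doubly flipped square that survives (i)--(iv).
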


\begin{proof}
Let $m_0$ be the active size at the start of the first ordinary \textsc{Spread} of the epoch, let $\cert_0$ be the current certificate, and let $\gamma_0=2\ceil*{\sqrt{m_0}}$, so the square area is $\gamma_0^2\ge4m_0$.  A clean epoch begins before any upward loss for this label, so $m_0\ge\cert_0$; contacted remnants are handled next.

\emph{Clean path.}  Suppose the subgroup path meets no strictly higher label during the first \textsc{Spread}.  Every blocked covered cell is then due to an equal/lower label that is claimed, collected, or ignored as an already-settled same-label marker.  Let $m_1$ be the active size recomputed at $C$ after the cycle.  If $m_1\ge2\cert_0$, the decision rule performs the single permitted level-up at $C$ and the old label episode ends.  This is the promotion case: a subgroup that met no higher subgroup arrives at $C$ with many agents.  Since promotion occurs only at the end of SRC, any higher-label contact capable of preventing it would have happened in the first \textsc{Spread}.  If $m_1<2\cert_0$, then $m_1<2m_0$, so the remaining active population is smaller than half the slack square.  The double flip sends the next \textsc{Spread} back through the same geometric square, where same-labelled settled agents are already safely dispersed and the remaining slack can settle the active remnant unless a new higher-label contact occurs.  Thus a clean path either levels up after the first cycle or settles within the next cycle.

\emph{Contacted path.}  Now suppose the first or second cycle contacts a strictly higher label.  The contact may be a moving--moving collision, a higher \textsc{Reverse}/\textsc{Collect}, or a higher marker.  In all cases, exactly the agents on the contacted path transfer upward and the old path becomes losing.  Old-label agents that already settled remain passive-settled unless a same- or higher-labelled \textsc{Collect} later recruits them.  The only active old-label agents that can remain are overflow agents at the old destination $C$; they keep the old certificate, cannot level up from the reduced size, and run at most one ordinary reverse/collect or final spread to settle the remnant.

It remains to check late adoption from a higher marker.  By Section~\ref{subsec:marker}, the adopting agents know how long ago the stored spread head left that marker.  If it finds the higher subgroup head, then either no overflow was created, or the joined subgroup's excess agents detect overflow at $C$ and wait for the higher label's ordinary collect.  If they cross an already-started reverse phase, follows a path whose reverse runs elsewhere, or misses the reverse/collect altogether, it still reaches $C$ and waits $2\gamma$ rounds.  A collect reaching $C$ absorbs it; if none reaches $C$ in that window, they start the unique recovery \textsc{Reverse--Collect}.  This recovery is charged to the higher label's epoch and is non-promoting unless it merges into a live higher label.

Combining the cases: a clean path uses at most two cycles; a contacted losing path uses the contact cycle, at most one remnant/final-settling cycle, and, for agents that adopted through a marker, at most one recovery cycle charged to the higher label.  Hence, each subgroup path of a fixed label is charged for at most three SRC cycles. 
\end{proof}

\begin{corollary}\label{cor:phases}
For a fixed subgroup path within a single epoch of a label, the path runs $O(1)$ phases: at most $4$ bounce \textsc{Spread}s, at most $3$ cycles (each with one \textsc{Spread}, one \textsc{Reverse}, and one \textsc{Collect}, with terminating cycles possibly shorter), one late-adopter return window, and at most one terminating whole-grid \textsc{Spread}.
\end{corollary}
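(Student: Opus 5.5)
The corollary follows by stacking three earlier results: Lemma~\ref{lem:bounce} bounds the bounces, Lemma~\ref{lem:threecycles} bounds the cycles, and Lemma~\ref{lem:sync} shows each phase and wait window is a single bounded unit. Fix a subgroup path $P$ within one epoch of a label $\Lab=(\ell,\cert,\id)$. I will partition the rounds in which $P$ is active under $\Lab$ into consecutive segments, each either a \textsc{Spread}, a \textsc{Reverse}, a \textsc{Collect}, or a late-adopter wait window. Then I count the segments of each kind.

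First, I treat the cycle structure. By Lemma~\ref{lem:threecycles}, with $\gamma\le\Side$, $P$ is charged for at most three SRC cycles of $\Lab$ before it levels up, settles all its agents, or transfers them upward. Each of these events ends the epoch of $\Lab$ for $P$. By the definition of a cycle in Section~\ref{subsec:setup}, a cycle is one \textsc{Spread}, optionally followed by one \textsc{Reverse} and one \textsc{Collect}. So the cycles contribute at most three of each phase, and a terminating cycle (no overflow) contributes only its \textsc{Spread}. I also need that the recovery \textsc{Reverse--Collect} of Section~\ref{subsec:marker} is already counted. The proof of Lemma~\ref{lem:threecycles} charges it as the third cycle, and it occurs only once per label/epoch. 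The single $2\gamma$ waiting window at $C$ is the ``one late-adopter return window.'' It is unique because the waiting agents either join an arriving \textsc{Collect} or start exactly one recovery.

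Second, I handle clipped \textsc{Spread}s, which by Section~\ref{subsec:boundary} skip \textsc{Reverse} and \textsc{Collect} and are not counted as clean cycles. Lemma~\ref{lem:bounce} gives at most $4$ bounces before the next clean \textsc{Spread} while $\gamma\le\Side$, since $\gamma$ is kept fixed through a bounce. If $\gamma>\Side$, Lemma~\ref{lem:bounce} gives at most $4$ bounces followed by one terminating whole-grid \textsc{Spread}, after which all agents terminate. This covers the ``at most one terminating whole-grid \textsc{Spread}'' term.

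Summing gives at most $4+3\cdot 3+1+1=O(1)$ phases. By Lemma~\ref{lem:sync}, each lasts $O(\sqrt m)$ rounds.

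The main obstacle is the bounce count. Lemma~\ref{lem:bounce} bounds bounces \emph{before the next clean \textsc{Spread}}, not per epoch. A path could in principle bounce again after a clean cycle, following the double flip and the new anchor $A\leftarrow C$. To obtain the stated total of $4$, I would argue as follows. After the first alignment, the re-anchored corner together with the unchanged or shrinking $\gamma$ within the epoch means later squares are reflections lying inside the already-fitted region. If that argument does not close, I would accept the weaker but still $O(1)$ bound of $4$ bounces per cycle, giving at most $12$ bounce \textsc{Spread}s. Either way, the corollary's $O(1)$ conclusion stands.
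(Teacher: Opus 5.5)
Your overall route is the one the paper intends. The corollary has no separate proof. It is stated right after Lemma~\ref{lem:threecycles} as the combination of three ingredients: Lemma~\ref{lem:bounce} (bounces and the terminating whole-grid \textsc{Spread}), Lemma~\ref{lem:threecycles} (three cycles, with the recovery \textsc{Reverse--Collect} counted among them), and the single $2\gamma$ window of Section~\ref{subsec:marker} and Lemma~\ref{lem:sync}. You are also right that stacking these does not literally give ``at most $4$ bounce \textsc{Spread}s'' per epoch, since Lemma~\ref{lem:bounce} only bounds bounces before the \emph{next} clean \textsc{Spread}.

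The step you propose to close this fails, because $\gamma$ is not non-increasing within an epoch. The end-of-cycle rule of Section~\ref{subsec:decision} sets $\gamma\leftarrow 2\lceil\sqrt{m_{\mathrm{after}}}\rceil$ unconditionally. \textsc{Collect} also re-absorbs the group's own settled agents as well as lower-labelled ones, so $m_{\mathrm{after}}$ can exceed $m_0$ while staying below $2\cert$. For example, $\cert=m_0=10$ and $m_{\mathrm{after}}=19$ take $\gamma$ from $8$ to $10$ with no level-up, so the epoch continues. The new sweep starts at $C=A+(\gamma-1)\delta$ in direction $-\delta$, so its ideal corner is $A-(\gamma_{\mathrm{new}}-\gamma)\delta$: it overshoots the old anchor by $\gamma_{\mathrm{new}}-\gamma$ on both axes. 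If the old anchor was the clipped corner left by an earlier bounce, it lies on the boundary with $\delta$ pointing inward. The next \textsc{Spread} then clips, and fresh bounces occur inside the same epoch. The proof of Lemma~\ref{lem:threecycles} makes the same slip when it says the double flip sends the next \textsc{Spread} ``back through the same geometric square.''

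So what you have actually proved is your fallback: at most $4$ bounce \textsc{Spread}s per cycle, hence at most $12$ per epoch, plus the terminating \textsc{Spread}, the three cycles, and one window. That establishes the $O(1)$ conclusion, which is all Theorem~\ref{thm:main} uses. It does not establish the constant $4$ written in the statement. Getting $4$ would need a new invariant that the paper does not supply, and the honest form of the corollary is ``at most $4$ bounce \textsc{Spread}s per cycle.''

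Be aware also that the per-interval bound you inherit from Lemma~\ref{lem:bounce} is itself fragile. Its proof treats the re-anchored point as a grid corner, but after a single-axis bounce it is only a side point. For $(\Side+1)/2<\gamma\le\Side$ the reflected squares can keep clipping. With $\Side=10$, $\gamma=8$, $A=(5,0)$, $\delta=(+1,+1)$, the successive anchors are $(9,7),(2,9),(0,2),(7,0),(9,7),\dots$, repeating for as long as the group keeps overflow.
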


We now prove the main complexity result.
\begin{theorem}\label{thm:main}
SRC terminates on every oriented square grid.  Its running time is
$T(k) = O(\sqrt{k})$,
and every agent uses $O(\log k)$ bits of memory.
\end{theorem}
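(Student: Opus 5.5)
The plan is to prove the three claims (termination, $O(\sqrt{k})$ time, $O(\log k)$ memory) by charging the running time level by level. The main tool is the hierarchical merging-and-doubling argument built from Lemmas~\ref{lem:sync}, \ref{lem:doubling}, \ref{lem:threecycles} and Corollary~\ref{cor:phases}.

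\emph{Step 1: cost of one level.} Fix a label $\Lab=(\ell,\cert,\id)$. Every phase run under this label uses $\gamma=2\ceil{\sqrt{m}}$, where $m$ is the active size at the start of the cycle. I would first argue that $m=O(\cert)$ throughout an epoch of level $\ell$, with $\cert=O(2^{\ell})$ relative to the size that forced the last level-up. The reason is that collected mass of at least $2\cert$ triggers a level-up at $C$, so a non-promoting cycle has $m_{\mathrm{after}}<2\cert$. This gives $\gamma=O(\sqrt{\cert})$. Then Lemma~\ref{lem:sync} bounds each phase by $O(\sqrt{\cert})$ rounds. Corollary~\ref{cor:phases} bounds the number of phases, bounce \textsc{Spread}s and wait windows per subgroup path and epoch by $O(1)$. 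Hence an epoch of any label at level $\ell$ costs $O(\sqrt{\cert_\ell})$ rounds.

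\emph{Step 2: the critical chain.} Consider any agent $a$ and the sequence of labels it carries. By Lemma~\ref{lem:mono} this sequence is lexicographically non-decreasing. Each change is either a level-up or an adoption of a strictly higher label. The key claim is that the time $a$ spends is dominated by the time spent by the highest-labelled group it ends up in. Time spent under a lower label before adoption runs concurrently with the winner's own cycles, so it is charged to the winner (Lemma~\ref{lem:threecycles}: crossings create no extra lower-label cycles, and recovery is charged to the higher epoch). Following the winning label back through its own level-ups gives a chain of certificates $\cert_0<\cert_1<\dots<\cert_L$ with $\cert_{j+1}\ge 2\cert_j$ and $\cert_L\le k$, using Lemma~\ref{lem:doubling}. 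The total time is then
\[
\sum_{j=0}^{L} O\bigl(\sqrt{\cert_j}\bigr)\le O\bigl(\sqrt{k}\bigr)\sum_{i\ge 0}2^{-i/2}=O(\sqrt{k}).
\]
After the last level-up, Lemma~\ref{lem:threecycles} gives at most three further cycles before all agents have settled or moved up. If $\gamma>\Side$, the terminating whole-grid \textsc{Spread} of Lemma~\ref{lem:bounce} takes $O(\Side)$ rounds. This case arises only when $4\cert\gtrsim n\ge k$, so $\Side=O(\sqrt{k})$. Termination follows because the number of levels is finite ($\le\lceil\log_2 k\rceil$) and the work per level is bounded.

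\emph{Step 3: memory.} Each field of Table~\ref{tab:vars} fits in $O(\log k)$ bits:
\begin{itemize}
\item Identifiers take $O(\log k)$ bits by assumption.
\item $\ell\le\lceil\log_2 k\rceil$, and $\cert,m\le k$.
\item $\gamma,s,$ step and displacements of $A,C^\star$ are all $O(\sqrt{k})$.
\item The settling round can be stored as the elapsed time within the $O(\sqrt{\cert})$-length epoch.
\item The epoch counter is bounded and the direction and contact flag take $O(1)$ bits.
\end{itemize}

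\emph{Main obstacle.} The hard part is Step~2: justifying that interleaved time of distinct lower-labelled groups is never additive along one agent's history. An agent might be absorbed repeatedly by successively higher labels, each at a different level. I would handle this by an induction on the label order. The claim is that the time by which label $\Lab$ at level $\ell$ finishes (levels up, settles, or is absorbed) is at most $c\sum_{j\le\ell}\sqrt{2^{j}\cdot(\text{scale})}$ from time $0$. To prove it, I would observe that the agents of a label were either at the label's level from birth or were absorbed from lower labels whose completion bound is smaller. These absorbed agents join the current phase rather than restarting it. I would also need to check the claim $m=O(\cert)$ at the start of an epoch, since absorptions mid-cycle could inflate the active size. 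That excess is only certified at $C$, however, and triggers an immediate level-up, so it is charged to the next level.
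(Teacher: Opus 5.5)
\paragraph{Verdict.} Your skeleton is the paper's, but Step~2's concurrency claim, which carries the whole time bound, is not established as you state it; Step~1 also has two smaller flaws. The shared skeleton is: phase length from Lemma~\ref{lem:sync}, $O(1)$ phases per epoch from Corollary~\ref{cor:phases} and Lemma~\ref{lem:threecycles}, certificate doubling from Lemma~\ref{lem:doubling}, a geometric sum capped at $\sqrt{k}$, the $O(\Side)=O(\sqrt{k})$ terminating sweep, and a field-by-field memory count.

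\paragraph{Step~1.} The claim $\cert=O(2^{\ell})$ is false. Lemma~\ref{lem:doubling} gives only $\cert\ge 2^{\ell}$, and a rooted start already has $\ell=0$ and $\cert=k$. Costs must therefore be indexed by certificate, not by level. The paper does this with dyadic scales $\cert\in[2^j,2^{j+1})$, and your Step~2 chain implicitly does the same; but your ``main obstacle'' induction is again indexed by level. Your justification of $\gamma=O(\sqrt{\cert})$ at cycle start also covers only clean paths: you argue that uncertified excess ``triggers an immediate level-up'' at $C$. The rule of Section~\ref{subsec:decision} forbids a level-up on a contacted path, and a recovery \textsc{Reverse--Collect} never promotes. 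Yet the rule still sets $\gamma\leftarrow\gamma_{\mathrm{new}}$ from $m_{\mathrm{after}}$, so on those paths the excess enters $\gamma$ without being certified.

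\paragraph{Step~2: the concurrency claim.} The claim is fine when a lower agent merges into a live higher subgroup, since it then follows the winner's clock. It fails in general, for two reasons.
\begin{itemize}
\item The labels $H_1\prec H_2\prec\cdots$ that one agent successively adopts need not be linked by level-ups. They can share a level and differ only in $\cert$ or $\id$. Because the order is lexicographic, a higher label can even carry a smaller certificate, e.g.\ $(1,4,\cdot)\succ(0,1000,\cdot)$. So the agent's history is not a single doubling chain.
\item Late adoption is sequential rather than concurrent. The adopter walks the stored path, waits up to $2\gamma$ rounds at $C$, and may then run a recovery \textsc{Reverse--Collect}, all after the round of adoption. ``Charged to the higher epoch'' in Lemma~\ref{lem:threecycles} is a statement about cycle counts. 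It does not say these rounds overlap the lower label's rounds.
\end{itemize}
An induction along one lineage of level-ups therefore yields at best $\sum_i O(\sqrt{\cert_{H_i}})$ over the adopted labels, and that sum is not geometric.

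\paragraph{How the paper handles this step.} It does not argue concurrency. It fixes an agent, assigns each of its phases to the certificate scale current at that time, and claims in item~(ii) that the agent spends only $O(1)$ phases per scale. It cites Corollary~\ref{cor:phases} for clean paths and the charging clause of Lemma~\ref{lem:threecycles} for transfers. Your plan needs a statement of exactly this form: a bound on how many adoptions and phases one agent can accumulate within a single scale. As written, it neither proves such a bound nor reduces to it. Be aware that the paper also asserts rather than derives this bound for repeated transfers between labels of the same scale. This is where a complete proof needs genuinely new work.
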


\begin{proof}
\textbf{Termination.}  By Lemma~\ref{lem:doubling}, every agent's level is bounded by $\lceil\log_2 k\rceil$; by Corollary~\ref{cor:phases}, each subgroup path of a label episode contributes $O(1)$ phases; and by Lemma~\ref{lem:sync}, each phase and late-adopter wait is finite.  A contacted losing path cannot create a new level, so it cannot repeat indefinitely at the same label.  Hence every agent takes part in finitely many phases.

\textbf{Time bound.}  Fix an agent $a$ and bound its active rounds.  Charge $a$'s work to \emph{dyadic certificate scales}: a phase, wait window, or recovery is \emph{at scale $j$} if the label certificate satisfies $\cert\in[2^{j},2^{j+1})$ during it.  Active size is never larger than $k$, and in a clean level-$\ell$ episode the certificate is backed by the active size that created the level and by the recomputed size at $C$.  By Lemma~\ref{lem:sync}, a scale-$j$ phase lasts $O(\gamma)=O\!\bigl(2^{j/2}\bigr)$ rounds, up to the constant slack in $\gamma$; a contacted remnant can only be smaller, and time spent following a higher marker is charged to the higher label's scale.  Lemma~\ref{lem:overtake} prevents a smaller lower label from borrowing a higher group's mass without becoming part of the higher label.  It remains to bound the number of phases $a$ performs at each scale.

\emph{(i) The scales are bounded by $\lceil\log_2 k\rceil$.}  No certificate exceeds the number of agents collected to create it, so $\cert\le k$ and $j\le\log_2 k$ throughout.

\emph{(ii) $a$ performs $O(1)$ phases at each scale.}  Consider the label episode containing $a$ while its certificate is at scale $j$.  If its subgroup path meets no higher label, Corollary~\ref{cor:phases} gives $O(1)$ phases before it either levels up --- at which point Lemma~\ref{lem:doubling} implies that its certificate has at least doubled and left scale $j$ upward --- or terminates.  If it meets a strictly higher label, the encountered agents transfer upward immediately or through the late-adopter rule.  Agents that do not transfer either settle or form a contacted losing remnant.  By Lemma~\ref{lem:threecycles}, that remnant has only $O(1)$ further phases and cannot level up from the shrunken base.  When $a$ transfers upward, its remaining work is counted in the higher label's three-cycle episode; if $a$ remains in the losing remnant, it performs only bounded remnant work before settling or later transferring.  Hence $a$ occupies any fixed certificate scale for only $O(1)$ phases.

Combining, $a$'s total active time is
\[
  \sum_{j=0}^{\lceil\log_2 k\rceil} O\!\bigl(2^{j/2}\bigr)
  \ =\ O\!\bigl(2^{(\log_2 k)/2}\bigr)\cdot\frac{1}{1-1/\sqrt2}
  \ =\ O\!\bigl(\sqrt{k}\bigr),
\]
the geometric sum being dominated by its top term $\sqrt k$.  This is the \emph{hierarchical merging-and-doubling} bound with subgroup-local losing paths: partial absorption either settles the losing agents or transfers them to a strictly higher label whose three-cycle budget pays for late adoption/recovery.  Since the bound holds for every agent, $T(k)=O(\sqrt k)$.

\textbf{Memory.}  Each agent stores: its identifier ($O(\log k)$ bits); its label $(\ell,\cert,\id)$, where $\ell\le\lceil\log_2 k\rceil$ needs $O(\log\log k)$ bits and $\cert\le k$ and $\id$ need $O(\log k)$ each; active mass $m\le k$; subgroup index $s\le\gamma=O(\sqrt{k})$ and step count $\le\sweeplen=O(\sqrt{k})$, each $O(\log k)$ bits; the marker settling round and recovery timer, capped at $2\gamma=O(\sqrt{k})$ relative to the stored epoch; the subgroup-contact bit; the bounded epoch counter; the $2$-bit direction; and the anchor and destination stored \emph{relative} to the agent's position, whose per-axis displacement is at most $\gamma-1=O(\sqrt{k})$, i.e.\ $O(\log k)$ bits.  The agent never stores an absolute coordinate and never queries $n$.  The total is $O(\log k)$ bits. 
\end{proof}

Finally, we prove the correctness.
\begin{restatable}{lemma}{NoConflictLemma}\label{lem:noconflict}
At every round at most one settled agent occupies any node.
\end{restatable}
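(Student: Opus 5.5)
We prove Lemma~\ref{lem:noconflict} by induction on the round number. The invariant is that after the Move step of every round, each node carries at most one agent whose status is \texttt{settled}. Initially no agent is settled, so the base case is trivial.

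For the inductive step, we first note where an agent's status can become \texttt{settled}. Inspecting Algorithm~\ref{alg:src} and Sections~\ref{subsec:spread}--\ref{subsec:boundary}, this happens in exactly three places:
\begin{enumerate}[label=(\roman*)]
  \item the winner of the claim handshake in \textsc{Spread};
  \item the singleton rule;
  \item the terminating whole-grid \textsc{Spread}, which reuses the coverage and claim rule.
\end{enumerate}
Every other transition either leaves a settled agent in place or removes settled status. Claiming by a higher label overwrites only the label and sweep variables and keeps the agent passive. Recruitment by \textsc{Collect} turns a settled agent into a moving one, which can only decrease the count. A settled agent also never moves. It therefore suffices to show that, at a node $v$ holding $j\le 1$ settled agents, at most $1-j$ agents become settled there in the round.

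\textbf{Case $j=1$.} The node is occupied by a settled marker. Every rule of Section~\ref{subsec:spread} forbids settling on such a node:
\begin{itemize}
  \item a lower-labelled marker is only claimed;
  \item an equal-labelled marker blocks the node;
  \item a higher-labelled marker keeps the lowest-identifier higher-labelled agent as the unique settled representative, and the visitors follow the marker rule of Section~\ref{subsec:marker}.
\end{itemize}
The singleton rule also needs care here. We read it as applying only to an agent that is the sole agent on its node, since it requires the agent to be the only moving agent with $m=1$. If a settled agent shares the node, the agent instead falls under the marker rules and does not settle. Making this reading explicit is the delicate point of this case.

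\textbf{Case $j=0$.} All agents that could settle at $v$ are co-located \texttt{claiming} candidates, possibly from different labels and subgroups. The handshake compares all of them at once. Only the maximum label can settle, and only its lowest-identifier candidate does so; every other candidate reverts to \texttt{moving}. If the maximum label is in \textsc{Reverse} or \textsc{Collect}, nobody settles. Identifiers are unique, so the winner is unique. A singleton-rule agent at $v$ is by definition alone, so it cannot coincide with a candidate.

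The main obstacle is simultaneous arrival. Two groups might each see $v$ as free at the start of a round and settle in parallel. Because there is no edge-crossing detection, we must make sure both decisions are taken at the same node within the same Communicate step. Our plan is to argue that settlement is always a two-round process: an agent first becomes \texttt{claiming} on entering $v$ and settles only after the next communication step. Every potential settler is therefore physically at $v$ during that communication, and all of them are compared together.

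This leaves one subtle subcase: an agent that enters $v$ while a candidate already present there is resolving its claim. We would handle it by having the newcomer count the candidate as occupying $v$, treating it like a marker. Failing that, the newcomer becomes a candidate one round later, by which time $v$ already holds a settled agent and the case $j=1$ applies.

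Finally, bounces and the terminating sweep introduce no new settling mechanism beyond the handshake, so the same argument covers them. This closes the induction.
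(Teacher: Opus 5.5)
Your argument is correct at the paper's level of rigor and is the same case analysis the paper uses: the claim handshake yields a unique winner among co-located claimants (maximum \textsc{Spread} label, lowest identifier), and an existing settled marker of lower, equal, or higher label never admits a second settler (it is overwritten in place, blocks the node, or is kept as the single representative). Your induction framing and your explicit treatment of the singleton rule and of a claimant arriving one round late go beyond the paper's proof, which never mentions the singleton rule; note, though, that both of your fixes are stipulations rather than consequences of the stated rules (the singleton rule only requires the agent to be the sole \emph{moving} agent, and the handshake compares only claimants), and the paper's proof makes the same assumption implicitly when it asserts that a present marker prevents any second settlement.
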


\begin{proof}
A node becomes occupied by a settled agent only after the claim handshake on a covered-cell entry during \textsc{Spread}, or by retaining an already passive-settled marker whose label is overwritten.  In the claim handshake, all co-located candidates compare labels and identifiers before any candidate settles, and exactly the lowest-identifier candidate among the maximum \textsc{Spread} label may settle.  If a settled agent of equal or higher label is present, no second settled agent is created; the marker rule keeps exactly one settled representative at the node.  If a settled agent of strictly lower label is present, its label is overwritten in place.  Thus the number of settled agents on a node is always in $\{0,1\}$. 
\end{proof}

\begin{restatable}[overlap resolution]{lemma}{MeetingLemma}\label{lem:meeting}
If two groups $G_1$ and $G_2$ have overlapping swept squares and both attempt to settle on the same node, the lower label either meets the higher moving group, encounters a higher settled marker, or treats the higher settled agent as a blocker. In every case, at most one agent settles on the node, and every lower-labelled active agent that encounters the higher label transfers upward, marking only its own old subgroup path as contacted.
\end{restatable}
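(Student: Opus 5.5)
The plan is a case analysis on the state of the contested node $v$ at the round in which the lower-labelled group $G_{\mathrm{lo}}$ (say $G_1$, with $\Lab_1\prec\Lab_2$) enters it. Since both groups attempt to settle on $v$, the node is covered by a subgroup $s_1$ of $G_1$ and a subgroup $s_2$ of $G_2$ (Lemma~\ref{lem:coverage} gives a unique covering subgroup inside each square). At that round the higher label $\Lab_2$ is present at $v$ in one of three forms. It may be a moving agent, either the head of $s_2$ or a member of $G_2$ in \textsc{Reverse}/\textsc{Collect}. It may be a settled agent carrying $\Lab_2$ that is still in \textsc{Spread} or has already moved on. Or it may be absent. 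If it is absent and $G_1$ arrives first, we swap the roles: $G_2$ later arrives at a node holding a lower settled marker. I treat each of the three forms, plus this swapped situation, separately, always reading off the outcome from the first line of Algorithm~\ref{alg:src}, where the maximum label among the co-located agents is computed.

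\emph{Case 1 (co-located moving higher label).} This covers a simultaneous arrival or a pass-through. If both are claiming, the claim handshake of Section~\ref{subsec:spread} compares labels, and only the lowest-identifier candidate of the maximum \textsc{Spread} label settles. The $G_1$ candidates adopt $\Lab_2$ through the "two moving groups meet" rule and join $s_2$ at its current step, or follow its \textsc{Reverse}/\textsc{Collect} phase. \emph{Case 2 (higher settled marker present).} Here Algorithm~\ref{alg:src} makes the visitors adopt $\Lab_2$ and apply the late-adoption rule of Section~\ref{subsec:marker}. No new settlement occurs, because the marker remains the unique settled representative. I call this the blocker behaviour when $G_1$ simply walks on. \emph{Case 3 (lower settled marker, $G_2$ arriving later).} The marker is overwritten in place, and no visitor settles. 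In every case, the fact that at most one agent is settled on $v$ is then exactly Lemma~\ref{lem:noconflict}, which I will cite rather than reprove.

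For the transfer claim, I note that in Cases 1 and 2 the agents that change label are precisely those executing the \texttt{adopt} line of Algorithm~\ref{alg:src}, namely the $G_1$ agents at $v$. That same line sets the contact flag of their own subgroup path. Agents of $G_1$ on other subgroup paths are not co-located with $\Lab_2$, so their state is untouched. This is the partial-absorption statement, and Lemma~\ref{lem:mono} guarantees that the contacted path cannot promote.

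The main obstacle is the timing subtlety of edge crossings. Because there is no edge-crossing detection, a lower agent could swap past the higher head without the two ever sharing a node. I would handle this by observing that the lemma only concerns agents that \emph{encounter} the higher label at a node; any crossing agent then arrives at $v$ after the head has left. In that situation either the higher settled marker is present (Case 2), or $v$ was free and $s_2$ settled nothing there, in which case only $G_1$ actually settles on $v$. I would also check that a claim in the same round as a settling higher agent is covered by the single simultaneous handshake rather than by two sequential claims.
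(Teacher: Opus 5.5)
Your proposal is correct and follows essentially the same route as the paper. Both split on whether the two visits to the contested node coincide (merger rule plus claim handshake) or not (the later visitor finds a free node, a lower marker that is overwritten or collected, or a higher marker that blocks settlement and triggers late adoption), take uniqueness of the settled agent directly from Lemma~\ref{lem:noconflict}, and read the partial-transfer claim off the fact that only co-located agents adopt the higher label; your edge-crossing remark and contact-flag bookkeeping are just a more explicit version of the paper's non-coincident branch and closing sentence.
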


\begin{proof}
Let $u$ be a cell common to the two attempted sweeps.  If the two moving visits to $u$ coincide in a CCM round, the agents communicate and the lower label adopts the higher label by the merger rule; if both labels try to settle there, the claim handshake lets only the lowest-identifier agent among the maximum label settle.  If the visits do not coincide, the later visitor either finds no settled agent, leaving the cell free for the first settling attempt that reaches it, or finds a marker left by the earlier sweep.  A lower marker is overwritten or collected by the higher label according to the phase.  A higher marker blocks lower-label settlement and invokes the late-adopter rule of Section~\ref{subsec:marker} for the lower moving agents.  By Lemma~\ref{lem:noconflict}, no step creates a second settled agent at $u$.  Thus overlap is resolved locally: lower active agents that see the higher label move upward, and lower agents that never see it either settle elsewhere or continue as a clean or contacted overflow remnant according to their own subgroup histories. 
\end{proof}

\begin{theorem}\label{thm:correct}
SRC solves dispersion from every initial configuration $\mathcal{C}_0$ on the oriented square grid for all $k\le n$.
\end{theorem}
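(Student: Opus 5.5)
The plan is to split dispersion into its two requirements, \emph{termination} and \emph{validity}. Termination means every agent eventually stops, either passive-settled or terminated. Validity means that in the final configuration every node holds at most one agent. Termination I take from Theorem~\ref{thm:main}: every agent takes part in finitely many phases and the whole execution ends after $O(\sqrt{k})$ rounds. What is left is to show that when no agent is moving, every agent is settled, and that settled agents never share a node.

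\textbf{Step 1 (no two settled agents share a node).} This is exactly Lemma~\ref{lem:noconflict}. The claim handshake lets at most one candidate settle. A marker that is overwritten stays the unique representative on its node. \textsc{Collect} only removes settled agents and never creates them. Lemma~\ref{lem:meeting} covers the case where two groups' squares overlap, so the invariant holds at every round for every initial configuration $\mathcal{C}_0$. This includes round~$0$ configurations with arbitrary multiplicities, because all co-located agents start out moving and none starts settled.

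\textbf{Step 2 (no agent remains moving forever).} I would follow one moving agent $a$ through the case analysis of Lemma~\ref{lem:threecycles}. In every epoch, $a$'s subgroup path does one of three things: it levels up, it passive-settles all of its remaining agents, or it transfers them upward. Lemma~\ref{lem:doubling} caps the level at $\lceil\log_2 k\rceil$, and labels only increase under the lexicographic order (Lemma~\ref{lem:mono}), so $a$ cannot level up or transfer upward forever. It must therefore end in the settle case, or in the terminating whole-grid \textsc{Spread}. For the terminating \textsc{Spread}, I would use $m\le k\le n$ together with Lemma~\ref{lem:coverage}, applied to the whole grid in place of the square: each column has at least as many uncovered free nodes as active agents. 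The singleton rule handles groups of size one. Lemma~\ref{lem:bounce} rules out endless bouncing.

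\textbf{Step 3 (enough free cells in a non-terminating settle).} I expect this step to be the main obstacle. Take the clean-path case where $m_1<2\cert_0$. I need the next \textsc{Spread}, after the double flip, to actually have room to place the remnant. Here I would argue by area: the square has area $\gamma^2\ge 4m$. The cells blocked by equal-labelled markers are at most the number of same-label agents already placed. Cells held by lower-labelled markers are claimed rather than blocked. Cells held by higher-labelled markers cause a transfer, which is covered by Step~2. So the slack is positive.

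The delicate point is that settlement happens per column. Aggregate slack does not obviously mean that each subgroup finds a free node in its own column. To deal with this, I would use the rank rule $s=\alpha\bmod\gamma$, which puts at most $\lceil m/\gamma\rceil\le\gamma/2$ agents into each column. I would then bound the number of same-label markers per column in the same way. If that per-column bound fails, I fall back on the \textsc{Reverse}--\textsc{Collect} recovery, which re-gathers leftovers at $C$, together with Lemma~\ref{lem:threecycles}'s bound on the number of cycles.

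Putting Steps 1--3 together: at termination every agent is settled, and each node carries at most one settled agent, so SRC solves dispersion.
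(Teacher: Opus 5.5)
Your Steps 1 and 2 follow the paper's proof. Validity comes from Lemma~\ref{lem:noconflict} and termination from Theorem~\ref{thm:main}. For each active agent, the outcome trichotomy of Lemma~\ref{lem:threecycles} (level up, settle, or transfer upward) is combined with the level cap of Lemma~\ref{lem:doubling}. The paper spells out two things a little more: the higher-marker route (predicted head, predicted \textsc{Reverse}/\textsc{Collect}, or the $2\gamma$ wait and single recovery at $C$), and the fact that an agent re-collected by a same- or higher-labelled \textsc{Collect} is then settled by that episode. It does not attempt anything like your Step 3. It takes settlement of the clean-path remnant directly from Lemma~\ref{lem:threecycles}, which is exactly where your fallback lands, so that citation is what carries your argument.

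As an independent justification, Step 3 does not hold as written and is better dropped. In \textsc{Spread}, a lower-labelled marker is claimed but no visitor settles on it, so it blocks the cell just as an equal-labelled marker does. Once claimed, it also carries your group's label, so the equal-labelled blockers are not limited to your group's own earlier placements. The blocked cells in the square are therefore bounded only by the number of previously settled agents there. Hence $\gamma^2\ge 4m$ alone gives no positive slack, let alone per-column slack. In the paper's design, dense regions are handled by overflow and \textsc{Collect}-driven growth toward a level-up, not by slack.

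Likewise, your per-column claim for the terminating whole-grid \textsc{Spread} is unjustified. With $\gamma>\sqrt{n}$ and a corner anchor, some subgroups cover columns lying entirely off the grid. The paper does not argue this per column either; it relies on the statement in Section~\ref{subsec:boundary}.
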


\begin{proof}
By Lemma~\ref{lem:noconflict}, no two settled agents ever share a node, so any all-settled configuration is dispersed.  It remains to show that every agent terminates settled.

Termination follows from Theorem~\ref{thm:main}.  For settling, consider any active agent.  If its subgroup path never meets a higher label, the path is clean and Lemma~\ref{lem:threecycles} either settles it or moves it to a higher level after a true doubling at $C$.  If it meets a higher moving group, it transfers immediately and, when the higher label is in \textsc{Reverse} or \textsc{Collect}, follows that phase.  If it meets a higher marker, Section~\ref{subsec:marker} sends it to the predicted subgroup head, the predicted reverse/collect phase, or the common destination $C$; missed edge crossings and wrong-path misses are absorbed by the $2\gamma$ wait/recovery rule.  If it remains in the lower label after other agents transfer upward, only its own contacted path is losing, and Lemma~\ref{lem:threecycles} gives finitely many further remnant cycles with no promotion from the smaller base.  Since levels are bounded by Lemma~\ref{lem:doubling}, this process cannot continue forever.  Eventually every agent is passive-settled, or is collected by a same- or higher-labelled \textsc{Collect} and then settled by that higher episode.  By Lemma~\ref{lem:noconflict}, the final settled configuration is dispersed, and every agent terminates. 
\end{proof}

\section{Discussion and Extensions}\label{sec:disc}

\paragraph{\textbf{Infinite grids.}}
It is worth highlighting that SRC does not require knowledge of the environment. 
It never reads $n$: the only possible use of grid size is the boundary-clipping test of Section~\ref{subsec:sweep}, resolved locally by checking whether the required outgoing port exists.  If it does, the
sweep proceeds; if not, the group has met an edge and reflects its square
back inside by the bounce of Section~\ref{subsec:boundary}. 
The grid enters at just one point: when an agent checks whether the port it needs exists.
Everything else a group does --- settling, merging, collecting and levelling up --- remains the same on infinite grids.

On such a grid there is no boundary, so every port an agent asks for is present, and no bounce occurs. A group then just sweeps, settles, and merges the way it does inside a finite grid, and Theorem~\ref{thm:main} still holds in the absence of bounce, so the running time remains the same, i.e., $O(\sqrt{k})$. Some of the earlier grid algorithms~\cite{BKM24, BKM25, KMS20} use the boundary as a place for the agents to meet and get organised, so when there is no boundary, their first step is no longer defined or costly in terms of rounds. We believe our ``beyond boundary'' dispersing agents would be independently useful for the more symmetric graph structures.

\paragraph{\textbf{Relative-coordinate memory.}}
All sweep coordinates are stored relative to the anchor $A$, which is re-anchored to the agent's current node at the start of every phase.  The largest recorded displacement is therefore the sweep extent $\le 2(\gamma-1)=O(\sqrt{k})$, fitting in $O(\log k)$ bits.  This matches the $\Omega(\log k)$ memory lower bound needed just to give $k$ agents distinct identifiers, so SRC is memory-optimal as well as time-optimal.

\section{Concluding Remark and Future Works}\label{sec:conclusion}
 We presented \textsc{Spread--Reverse--Collect}, a deterministic synchronous algorithm that disperses any $k\le n$ agents on an oriented square grid in $O(\sqrt{k})$ rounds using $O(\log k)$ bits per agent. This matches the $\Omega(\sqrt{k})$ time and $\Omega(\log k)$ memory lower bounds in the local communication model, closing the gap left by~\cite{KMS20}. SRC relies on one geometric primitive---the \sharpS-shaped parallel sweep---and one rule: the higher label wins. Its running time follows from the hierarchical merging-and-doubling argument: a label's certificate at least doubles at every level, contacted losing subgroup paths remain non-promoting, and the total work is bounded by $\sum_{\ell\le\log_2 k}O(2^{\ell/2})=O(\sqrt{k})$.

 A natural next step is dispersion on a rectangular $W\times H$ grid with $W\le H$ (w.l.o.g).
 The square assumption enters the boundary rule: on a square grid, $\gamma>\Side$ means that the intended sweep exceeds both dimensions. On a rectangle, a sweep can exceed the short dimension while remaining much smaller than the long one, so the same bounce and whole-grid cases do not cover that situation. The time (round) complexity also changes in a thin grid. From a corner, at most $W(t+1)$ nodes are reachable in $t$ rounds, so dispersing $k$ agents requires $t\ge\lceil k/W\rceil-1$. In particular, a line can require time linear in $k$, even when the total grid is much larger.
 Thus $\sqrt{k}$ is the right target only when the grid is wide enough, and a thin grid requires a different, dimension-dependent bound.
A strategy similar to SRC might still work, but the main task is to replace the square sweep by a shape that adapts to the dimensions of the grid. We leave this as an open problem.

Several other directions remain open: extending SRC to unoriented square grids, fault-tolerant settings (including crash and Byzantine faults), and asynchronous schedulers while preserving the $O(\sqrt{k})$ running time are natural next steps. It would also be interesting to determine whether the same  merging-and-doubling framework applies to other graph classes with sublinear diameter (with respect to the graph size).

\bibliographystyle{plain}
\bibliography{biblio_cleaned_updated}

\end{document}